\documentclass[notitlepage,aps,letter,reprint,nofootinbib]{revtex4-1}

\usepackage{graphicx}
\usepackage{amsmath}
\usepackage{amssymb}
\usepackage{hyperref}
\usepackage{tikz}
\usepackage{overpic}
\usepackage{bbm}
\usepackage{float}
\usepackage{verbatim}
\usepackage{amsthm}
\usepackage{afterpage}
\usepackage{soul}
\usepackage{color}

 % helper macro
\newcommand{\be}{\begin{eqnarray*}}
\newcommand{\ee}{\end{eqnarray*}}
\newcommand{\beq}{\begin{eqnarray}}
\newcommand{\eeq}{\end{eqnarray}}
\newcommand{\bk}{{\mathbf{k}}}

\newcommand{\br}{{\mathbf{r}}}

\newcommand{\ba}{{\mathbf{a}}}
\newcommand{\bb}{{\mathbf{b}}}

\newcommand{\bG}{{\mathbf{G}}}

\newcommand{\dd}{\mathrm{d}}

\newcommand{\bx}{\hat{{\mathbf{x}}}}
\newcommand{\by}{\hat{{\mathbf{y}}}}

\newcommand{\ket}[1]{\left|{#1}\right\rangle}
\newcommand{\bra}[1]{\left\langle{#1}\right|}

\newcommand{\id}{\mathbb{I}}

\newtheorem{prop}{Proposition}

\DeclareMathOperator{\sgn}{sgn}

\begin{document}
\title{Interacting Floquet topological phases in three dimensions}
\author{Dominic Reiss}
\author{Fenner Harper}
\author{Rahul Roy}
\affiliation{Department of Physics and Astronomy, University of California, Los Angeles, California USA}
\date{\today}
\begin{abstract}

{In two dimensions, interacting Floquet topological phases may arise even in the absence of any protecting symmetry, exhibiting chiral edge transport that is robust to local perturbations. We explore a similar class of Floquet topological phases in three dimensions, with translational invariance but no other symmetry, which also exhibit anomalous transport at a boundary surface. By studying the space of local 2D unitary operators, we show that the boundary behavior of such phases falls into equivalence classes, each characterized by an infinite set of reciprocal lattice vectors. In turn, this provides a classification of the 3D bulk, which we argue is complete. We demonstrate that such phases may be generated by exactly-solvable `exchange drives' in the bulk. In the process, we show that the edge behavior of a general exchange drive in two or three dimensions can be deduced from the geometric properties of its action in the bulk, a form of bulk-boundary correspondence.}
\end{abstract}
\maketitle

\section{Introduction}

Driving a system periodically in time can generate remarkable behavior with an intrinsically dynamical character. In this rapidly evolving field of Floquet systems, recent advances include the prediction of phases which exhibit an analog of symmetry breaking in the time domain, known as discrete time crystals or $\pi$-spin glasses~\cite{Khemani:2016gd,Else:2017kg,vonKeyserlingk:2016ea,vonKeyserlingk:2016ev,Else:2016gf,Yao:2017bu}, as well as a host of novel topological phases that lie beyond any static characterization~\cite{Kitagawa:2010bu,Jiang:2011cw,Rudner:2013bg,Thakurathi:2013dt,Asboth:2014bg,Nathan:2015bi,Titum:2015fl,Carpentier:2015dn,Fruchart:2016hk,Titum:2016km,Roy:2017cs,vonKeyserlingk:2016bq,Else:2016ja,Potter:2016cr,Roy:2016ka,Potter:2017dp,Roy:2017ct}. These theoretical works have been complemented by significant experimental advances, with analogs of Floquet topological phases being realized in a variety of different settings~\cite{Kitagawa:2012gl,Rechtsman:2013fe,Jotzu:2015kz,JimenezGarcia:2015kd,Maczewsky:2016fj,Zhang:2017ci,Choi:2017ho}.

A particularly surprising set of Floquet topological phases are those which are robust even in the absence of symmetry~\cite{Rudner:2013bg,Po:2016iq,Harper:2017ce,Po:2017ex,Fidkowski:2017us}. In the presence of interactions, 2D systems in this class have been shown to exhibit robust Hilbert space translation at the boundary of an open system~\cite{Po:2016iq,Harper:2017ce}, and may be combined with bulk topological order to generate Floquet enriched topological phases~\cite{Po:2017ex,Fidkowski:2017us}. Despite their range of novel features, systems in this class have so far only been studied in 2D; in this paper, we set out to find and classify the Floquet topological phases that exist in 3D, under the assumption of translational invariance.

Similar to the classification of the related 2D phases, our approach is to first identify the distinct types of boundary behavior that these 3D Floquet systems may exhibit. By invoking ideas from Ref.~\cite{Gross:2012fm}, we find that local, translationally invariant unitary operators in two dimensions form distinct equivalence classes with representative `shift' (or translation) actions. In turn, this boundary classification partitions the space of 3D unitary evolutions in the bulk into distinct classes. Each class may be labeled by a topological invariant (in this case, an infinite set of reciprocal lattice vectors), with drives that are members of the same class being topologically equivalent at a boundary. We construct exactly solvable bulk drives which populate these equivalence classes, and in the process, identify a geometric property of such a drive that determines its anomalous behavior at an arbitrary boundary, a result that also applies to 2D. We argue that there are no intrinsically 3D Floquet topological phases (without symmetry), making this classification complete.

The structure of this paper is as follows: We begin with a brief review of Floquet systems and phases in Sec.~\ref{sec:Prelim} and provide some additional motivation for the work. In Sec.~\ref{sec:edgeclassification}, we describe and classify local, translationally invariant unitary operators with no symmetry in two dimensions, and show how this classification may be applied to the boundaries of 3D Floquet systems. In Sec.~\ref{sec:2D}, we describe a modification of the exactly solvable `exchange drives' introduced in Refs.~\onlinecite{Rudner:2013bg,Po:2016iq,Harper:2017ce}, and show that these have geometric properties directly related to their action at a boundary. In Sec.~\ref{sec:3D}, we extend these models to 3D, and demonstrate that they may be used to generate boundary behavior from all equivalence classes. We summarize and discuss our results in Sec.~\ref{sec:Discussion}.

\section{Preliminary Discussion \label{sec:Prelim}}

We begin by recalling some concepts from the study of time-dependent systems that we will use throughout the paper. We are primarily interested in Floquet systems, whose Hamiltonians are periodic in time (with $H(t+T)=H(t)$). The behavior of such a system is captured by the unitary time-evolution operator, defined by
\begin{equation}
U(t)=\mathcal{T}\exp\left[-i\int_0^tH(t')\dd t'\right],
\end{equation}
where $\mathcal{T}$ indicates time ordering. Although the system Hamiltonian can in general vary continuously with time, the models we consider in this paper will have Hamiltonians that are piecewise constant. For these systems, the complete unitary evolution operator is a product of unitary evolutions corresponding to each step, applied in chronological order.

We will classify these dynamical systems using the homotopy approach introduced in Ref.~\onlinecite{Roy:2017ct}, which is concerned with identifying topologically distinct paths $U(t)$ within the space of unitary evolutions. This framework has the advantage that it disentangles questions about the topology of the path $U(t)$ from questions about the stability of the resulting phase. For example, interacting Floquet systems are believed to be inherently unstable to heating, since energy is pumped into the system with every driving cycle~\cite{Lazarides:2014ie,DAlessio:2014fga,Ponte:2015hm}. To prevent heating to infinite temperature, strong disorder may be added so that the system becomes many-body localized~\cite{DAlessio:2013fv,Ponte:2015dc,Lazarides:2015jd,Abanin:2015bc,Abanin:2016eva,Zhang:2016hy} (see Ref.~\onlinecite{Nandkishore:2015kt} for a review of many-body localization (MBL)). In the homotopy approach, the topology of an evolution is well defined in the absence of MBL, even if MBL may be necessary in a physical realization of the model \cite{Roy:2017ct}.

The homotopy approach also allows a distinction to be made between static topological order, which depends only on the end point of the unitary evolution, and inherently dynamical topological order, which depends on the complete path of the evolution $U(t)$. This latter kind of order can be completely classified by studying a subset of unitary evolutions known as unitary loops~\cite{Roy:2017ct}, which, for a closed system, satisfy $U(0)=U(T)=\id$. For an open system, however, a unitary loop will not necessarily return to the identity, but may instead act nontrivially in a region near the boundary: We refer to the nontrivial action of a unitary loop restricted to this region as the `effective edge unitary'. Explicitly, we may write the closed system Hamiltonian as
\beq
H_{\rm closed}(t) &=& H_{\rm open}(t)+H_{\rm edge}(t),
\eeq
where $H_{\rm edge}$ connects sites across a boundary and $H_{\rm open}$ connects sites away from the boundary. We may then evolve with $H_{\rm open}(t)$ for a complete cycle to obtain the effective edge unitary $U_{\rm eff}$. Since $U_{\rm eff}$ acts as the identity in the bulk, we can restrict our attention to the boundary system on which the unitary acts non-trivially.

In this paper, our first aim is to obtain a complete characterization of effective edge unitaries described by local, two-dimensional unitary operators with translational invariance. We will then show that these distinct effective edge unitaries may be used to classify unitary loops in the 3D bulk, and will obtain an explicit set of loop drives which generate the different boundary behaviors. Although the unitary loops we introduce may seem somewhat fine-tuned, we will argue that any chiral Floquet phase is topologically equivalent to one of these representative drives, in the sense that their edge behaviors are equivalent.

%%%%%
%%%%%
\section{Edge classifications in 2D and 3D}\label{sec:edgeclassification}
Dynamical phases of 2D Floquet systems with no symmetry were classified based on their boundary behavior in Refs.~\onlinecite{Po:2016iq,Harper:2017ce}, building on a rigorous classification of 1D unitary operators from Ref.~\onlinecite{Gross:2012fm}. The aim of this paper is to obtain a similar classification of 3D Floquet systems by considering the distinct behaviors that may arise at a 2D boundary. To this end, we now briefly review the classification of unitary operators at a 1D boundary, before going on to discuss the 2D case.

\subsection{Effective unitary operators at a 1D boundary}\label{sec:edgeclassification2d}
As motivated in Sec.~\ref{sec:Prelim}, the edge action of a 2D Floquet system is fully described by a 1D unitary operator, $U_{\rm eff}$. Since the underlying Hamiltonian which generates the evolution should be physically motivated, the only restriction on the form of $U_{\rm eff}$ is that it should be \emph{local}---i.e., it should map local operators onto other local operators. There is no requirement, however, that it be possible to generate $U_{\rm eff}$ with a local \emph{one-dimensional} Hamiltonian. This potential anomaly partitions the space of 1D edge behaviors into different equivalence classes.

In Ref.~\onlinecite{Gross:2012fm}, 1D unitary operators of this form were classified according to the `net flow of quantum information' through the system. It was shown that this flow may be characterized by a discrete, locally computable index, which we refer to as the GNVW index. Unitaries within each resulting class are equivalent up to locally generated (in 1D) unitary transformations of finite depth. In the context of Floquet systems, these equivalence classes correspond to effective edge unitaries with distinct topological invariants.

We now review the construction and interpretation of the GNVW index of a 1D unitary, ${\rm ind}(U)$. First, we imagine cutting an (infinite) 1D system into left and right halves. We then choose a finite (but large) set of sites immediately to the left and to the right of the cut and denote these subsystems as $L$ and $R$, respectively. The GNVW index compares the extent to which the observable algebra in $L$ is mapped onto the observable algebra in $R$, and vice versa, by the action of the 1D unitary. 

Explicitly, we define the observable algebra on subsystems $L$, $R$ and their union $L+R$ to be $\mathcal{A}_L$, $\mathcal{A}_R$, and $\mathcal{A}$, respectively. The matrix units $\ket{e_{ij}}\cong\ket{i}\!\bra{j}$, with $\ket{i}$ and $\ket{j}$ states from the appropriate region, form a suitable basis for the observable algebra. A unitary operator $U$ acts on a member of an observable algebra through conjugation: i.e., the unitary action $\alpha_U$ on some element $M$ is $\alpha_U(M) = U M U^{-1}$. Finally, we define the normalized trace $\mathbf{Tr}$ for an operator algebra $\mathcal{A}$ with dimension $d$ as $\mathbf{Tr}(M) = \frac{1}{d}\textrm{Tr}(M)$, for any $M \in \mathcal{A}$, with $\textrm{Tr}$ the usual operator trace. 

With these definitions, the overlap of two subalgebras $\mathcal{B}_{1/2} \subset \mathcal{A}$ is given by 
\begin{equation}
\eta(\mathcal{B}_1,\mathcal{B}_2) = \sqrt{\mathbf{Tr}(P_1 P_2)},
\end{equation}
where the trace is taken over the algebra $\mathcal{A}$, and $P_{n}$ are projectors defined through $P_n=d_n\sum_{ij}\ket{e_{ij}}\!\bra{e_{ij}}$ (with $\ket{e_{ij}}\in\mathcal{B}_n$ and $d_n$ the dimension of $\mathcal{B}_n$). The value of $\eta(\mathcal{B}_1,\mathcal{B}_2)$ is always greater than or equal to one, with equality holding only when $\mathcal{B}_1$ and $\mathcal{B}_2$ commute~\cite{Gross:2012fm}.

In terms of $\eta$, the GNVW index of a unitary operator $U$ is given by the ratio
\begin{equation}\label{eq:GVNWind}
\textrm{ind}(U) = \frac{\eta(\alpha_U(\mathcal{A}_L),\mathcal{A}_R)}{\eta(\alpha_U(\mathcal{A}_R),\mathcal{A}_L)}.
\end{equation}
In Ref.~\onlinecite{Gross:2012fm} it is shown that ${\rm ind}(U)$ is always a positive rational number, $p/q$. In addition, the value of the index is independent of the choice of $L$ and $R$ (as long as they are sufficiently large) and independent of the location of the cut within the system. Importantly, $\textrm{ind}(U)$ is robust against unitary evolutions generated by local 1D Hamiltonians, and therefore defines a set of equivalence classes enumerated by positive rational numbers.

Each equivalence class has a representative unitary operator that may be defined in terms of `shifts'. A shift $\sigma$ is a unitary operator which uniformly translates the local Hilbert space on each site to the right by one site. Explicitly, if $\mathcal{H}_x$ is the Hilbert space on site $x$, then $\sigma\mathcal{H}_x=\mathcal{H}_{x+1}$. The representative unitary operator corresponding to the equivalence class with index $p/q$ is the tensor product $\sigma_p\otimes\sigma^{-1}_q$, which is a shift to the right of a Hilbert space with dimension $p$ combined with a shift to the left of a Hilbert space with dimension $q$. A generic (local) 1D unitary operator can always be brought to a representative shift of this form through the action of a finite-depth quantum circuit. In the context of 2D Floquet systems, these representative shift unitaries correspond to the chiral transport of a many-body state around the 1D boundary~\cite{Po:2016iq,Harper:2017ce}.

%%%
\subsection{Effective unitary operators at a 2D boundary}\label{sec:edgeclassification3d}

We now turn our attention to the edge action of 3D unitary loops, which may be described by effective unitary operators that are two dimensional. Without translational invariance, there is a large set of distinct 2D edge behaviors that could arise---for example, we could stack different shift drives $\sigma_p$ in parallel rows. In this paper we restrict the discussion to the manageable translationally invariant case, and leave a more general study to future work. 

The effective edge unitary ($U_{\rm eff}$) of a 3D unitary loop is a local unitary operator which acts on a quasi-2D boundary region. We assume that it is translationally invariant but that it may not be possible to generate $U_{\rm eff}$ using a local 2D Hamiltonian that acts only within the boundary region. Motivated by Refs.~\onlinecite{Po:2016iq,Harper:2017ce}, our approach will be to first classify local 2D unitary operators $U$ satisfying these properties, before using this boundary classification to infer a classification of the 3D bulk.

Without loss of generality, we choose $U$ to act on a Hilbert space which is a tensor product of $d$-dimensional Hilbert spaces located at each site of an (infinite) 2D Bravais lattice. Since $U$ is local, it has some Lieb-Robinson length $\lambda_{\rm LR}$ \cite{Lieb:1972wy}, and we assume for simplicity that the action of $U$ is strictly zero for separations greater than this length. 

In order to import some of the results from the 1D classification, we will treat the infinite 2D boundary as the limiting case of a sequence of quasi-1D cylindrical `periodic systems'. Given a lattice vector $\mathbf{r}$ and sufficiently large integer $N$ such that $|N\mathbf{r}| \gg \lambda_{\rm LR}$, we define a { periodic} system by identifying all lattice sites that are separated by an integer multiple of $N\mathbf{r}$. This periodic system can be thought of as having a compact dimension along the $\mathbf{r}$-direction with period $N\mathbf{r}$ and an extended dimension along any primitive lattice vector $\mathbf{r}'$ which is linearly independent to $\mathbf{r}$. We denote the unitary restricted to this periodic system as $U_{N,\mathbf{r}}$; since $U$ is translationally invariant and local, this restriction is always well defined.

We may now compute the GNVW index along the compact dimension of the periodic system. By defining a cut along $\mathbf{r}$, we divide this system in two halves ($L$ and $R$) as illustrated in Fig.~\ref{fig:cylinderLR}. The index, ${\rm ind}(U_{N,\mathbf{r}},\mathbf{r})$, associated with this cut can be calculated by viewing the system as a 1D edge (by grouping sites along $\mathbf{r}$) and using Eq.~\eqref{eq:GVNWind}.

\begin{figure}[t]
\centering
\includegraphics[width=\linewidth]{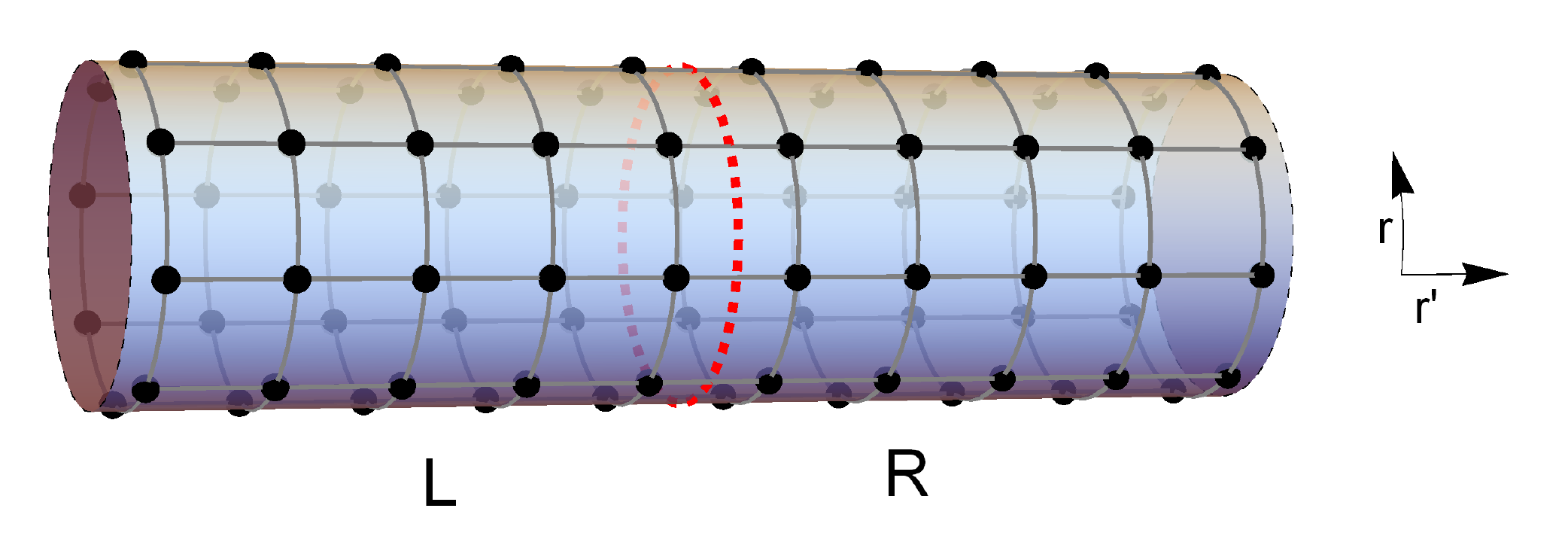}
\caption{Illustration of a choice of cut (red dashed line) which divides a 2D system with a compact dimension along $\mathbf{r}$ into the two halves $L$ and $R$. By grouping the Hilbert space of $N$ sites along the compact dimension, one may calculate the 1D GNVW index. This index is invariant under translations of the cut by $\mathbf{r'}$.} 
\label{fig:cylinderLR}
\end{figure}

The index ${\rm ind}(U_{N,\mathbf{r}},\mathbf{r})$ does not depend on the location of the cut, due to translational invariance in the $\mathbf{r}'$ direction. The value of ${\rm ind}(U_{N,\mathbf{r}},\mathbf{r})$ may, however, depend on the extent of the compact dimension $N\mathbf{r}$: If the compact dimension is made larger, then more information can flow across it. We therefore define a \emph{scaled} additive index 
\begin{equation}
\label{eq:defscaledadditiveindex}
\nu(\mathbf{r}) = \lim_{N \rightarrow \infty} \frac{1}{N} \log \textrm{ind}(U_{N,\mathbf{r}},\mathbf{r}),
\end{equation}
where the size of the periodic system is increased by taking the limit $N\to\infty$ for a fixed lattice vector $\mathbf{r}$. This limit defines a sequence of periodic systems which tends towards the infinite plane. { We expect the index $\textrm{ind}(U_{N,\mathbf{r}},\mathbf{r})$ to scale as a power of $N$ due to translation invariance, and we consequently expect $\nu(\mathbf{r})$ to be finite.}

 Since ${\rm ind}(U_{N,\mathbf{r}},\mathbf{r})$ is always a rational number \cite{Gross:2012fm}, the scaled additive index can be equivalently written as a sum over primes $p$, 
\begin{equation}
\label{eq:indexdecomp}
\nu(\mathbf{r}) = \sum_p n_p(\mathbf{r}) \log p,
\end{equation}
with integral coefficients $n_p$.

\begin{figure}[t]
\centering
\includegraphics[scale=0.32]{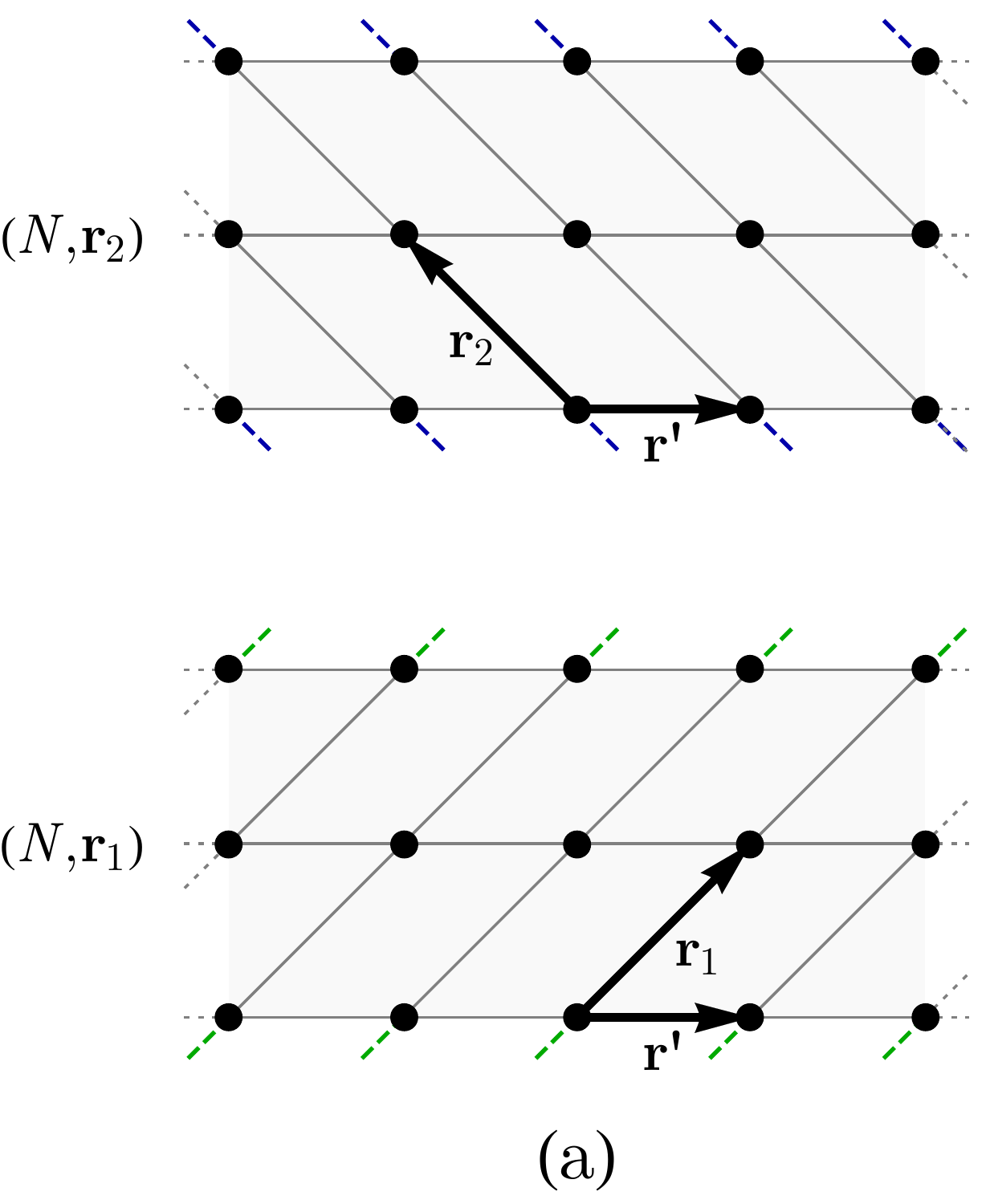}
\includegraphics[scale=0.32]{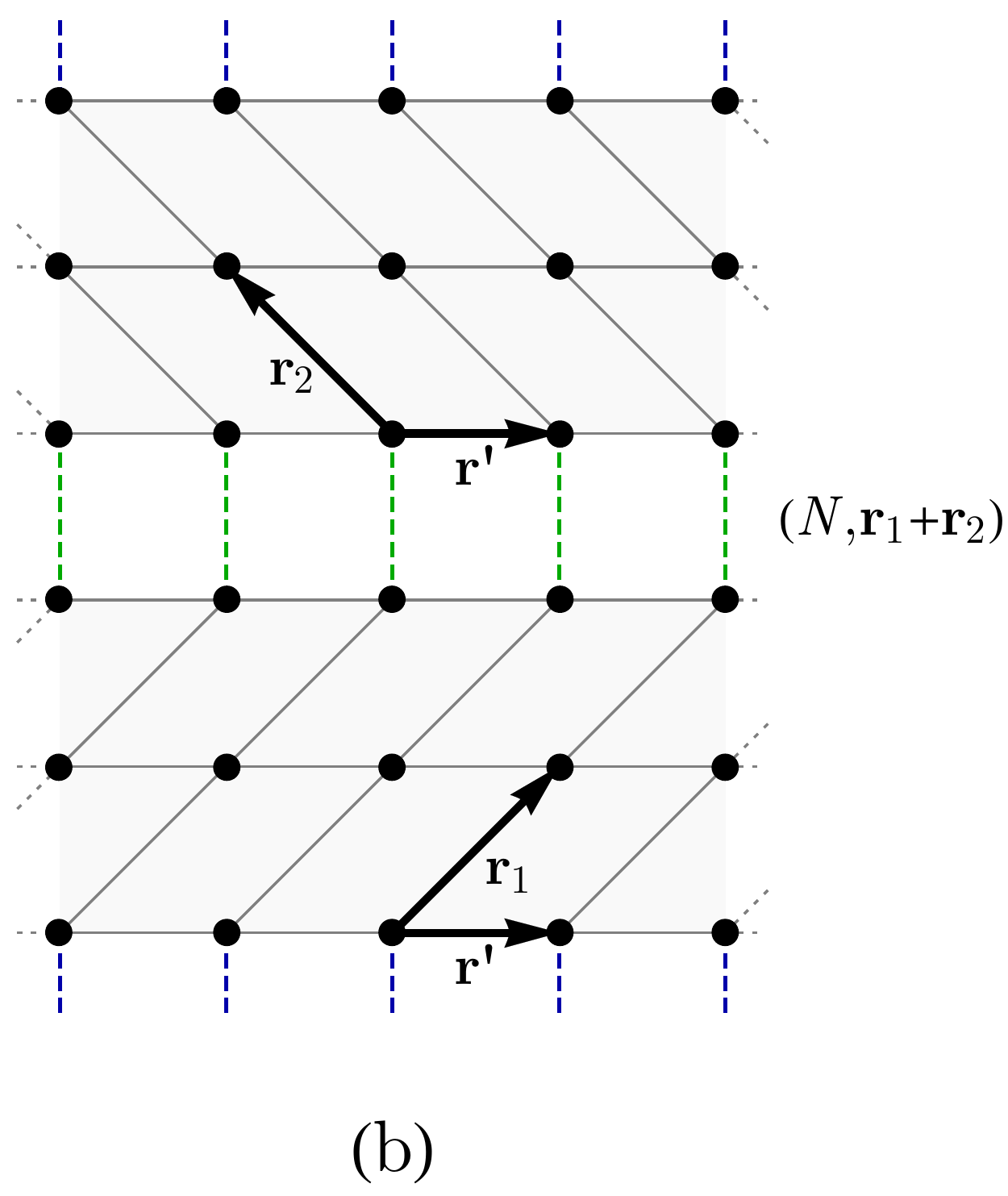}
\caption{The construction of a system from two periodic systems with common extended dimension $\mathbf{r}'$. (a) The lower periodic system has {compact} dimension along $\mathbf{r}_1$, with sites connected by the dashed green lines identified. Similarly, the upper system has periodic boundary conditions in the $\mathbf{r}_2$ dimension, with sites connected by dashed blue lines identified. (b) By cutting each system along a 1D sublattice in the $\mathbf{r}'$ dimension, and identifying sites between the two systems along the cuts (connected by dashed green/blue lines) we construct a system with periodic dimension along $\mathbf{r}_1+\mathbf{r}_2$.} 
\label{fig:join2dboundaries}
\end{figure}

We now investigate the relationship between different $\nu(\mathbf{r})$ with different choices of $\mathbf{r}$. We consider three periodic systems defined by $(N,\br_1)$, $(N,\br_2)$ and $(N,\br_1+\br_2)$, and write the action of the unitary restricted to each of these as $U_{N,\br_1}$, $U_{N,\br_2}$ and $U_{N,\br_1+\br_2}$, respectively. We now construct a fourth system, as shown in Fig.~\ref{fig:join2dboundaries}, by cutting the systems $(N,\br_1)$ and $(N,\br_2)$ each along a sublattice generated by some $\mathbf{r'}$ and reconnecting them along this line. The reconnection is carried out by restoring local terms such that the final system is compact along the $\mathbf{r}_1 + \mathbf{r}_2$ direction with length $N(\mathbf{r}_1 + \mathbf{r}_2)$. We write the action of the unitary on this composite system as $U'_{N,\br_1+\br_2}$, and note that further than $\lambda_{\rm LR}$ away from either cut, the action of $U'_{N,\br_1+\br_2}$ is identical that of $U_{N,\mathbf{r}_1+\mathbf{r}_2}$. 

We now argue that both of these unitaries correspond to the same index $\nu(\br_1+\br_2)$ and further, that $\nu(\br_1+\br_2)=\nu(\br_1)+\nu(\br_2)$. First, since $U'_{N,\br_1+\br_2}$ and $U_{N,\br_1+\br_2}$ differ (if at all) only in the vicinity of the two horizontal cuts used in defining the system, we must have
\begin{equation}
\label{eq:uuprimeindex}
{\rm ind}\left(U_{N,\br_1+\br_2}\right)=\delta\times{\rm ind}\left(U'_{N,\br_1+\br_2}\right),
\end{equation}
where $\delta$ is the contribution to the index caused by rejoining local terms in the unitary action across the cuts (to be discussed below).

We now construct a 1D cell structure for these systems compatible with both a `triangular' slice along the $\mathbf{r}_1$ direction followed by the $\mathbf{r}_2$ direction, and a `linear' slice along the $\mathbf{r}_1 + \mathbf{r}_2$ direction, as shown in Fig.~\ref{fig:cutcells}. The index ${\rm ind}(U)$ computed for a given unitary must be the same for either of these cuts, from the properties of the GNVW index \cite{Gross:2012fm}. Choosing the triangular slice, we see that the unitary $U'_{N,\br_1+\br_2}$ acts like either $U_{N,\br_1}$ or $U_{N,\br_2}$ away from the horizontal cuts. Overall, this means that
\begin{equation}
\label{eq:multiplicativeindexvectoradd}
\textrm{ind}(U_{N, \mathbf{r}_1 + \mathbf{r}_2})= \delta \times \textrm{ind}(U_{N, \mathbf{r}_1}) \times \textrm{ind}(U_{N,\mathbf{r_2}}),
\end{equation}
(where the $\delta$ here may differ from that in Eq.~\eqref{eq:uuprimeindex}, but will have the same scaling).

\begin{figure}[t]
\centering
\includegraphics[width=.8\linewidth]{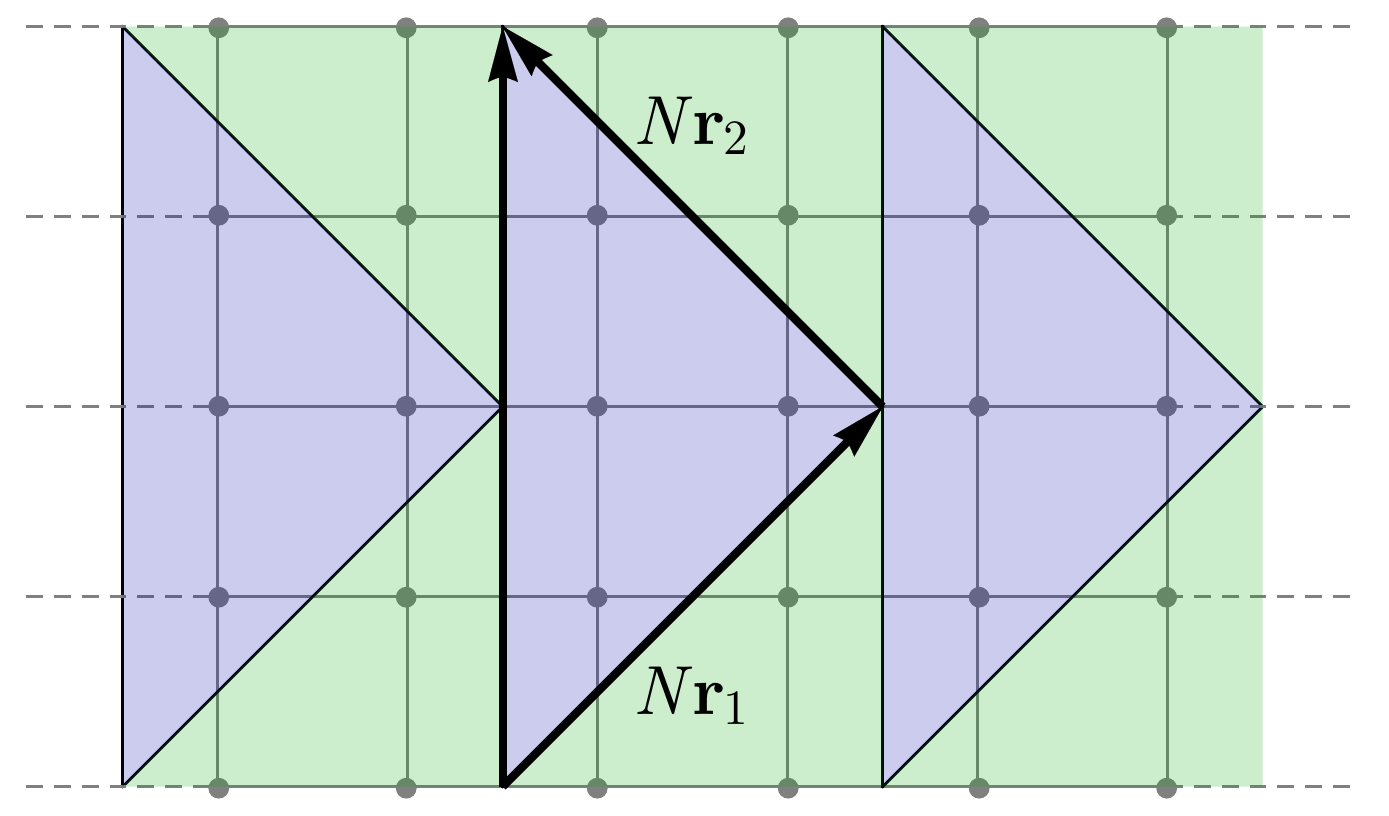}
\caption{Illustration of a 1D cell structure imposed on a 2D system with a compact dimension. For a system with compact dimension of length $N (\mathbf{r}_1 + \mathbf{r}_2)$ we can group the Hilbert spaces of sites within a blue region and pair of green systems into a single site on a 1D chain. The 1D GNVW index, however, is independent of the choice of location of cut used to define $L$ and $R$ in its computation. Therefore, dividing the 2D system using a `triangular' cut (along $N\mathbf{r}_1$ followed by $N\mathbf{r}_2$) or using a `linear' cut (along $N(\mathbf{r}_1 + \mathbf{r}_2)$) gives the same index ${\rm ind}(U)$.}
\label{fig:cutcells}
\end{figure}

The multiplicative correction to the index $\delta$, which is introduced when rejoining two periodic systems, is bounded above and below by constants which depend only on the Lieb-Robinson length $\lambda_{\rm LR}$ of the underlying 2D unitary $U$, and the on-site Hilbert space dimension $d$\footnote{Explicitly, for any 1D system, the largest index is achieved by a unitary whose action is equivalent to Hilbert-space translation by the Lieb-Robinson length. The upper bound on $\delta$ describes the case where the unitary before cutting and rejoining translates a region of dimension $\lambda_{\rm LR}$ near each cut from $L$ to $R$ by a distance $\lambda_{\rm LR}$, but after cutting and rejoining translates the region from $R$ to $L$ by $\lambda_{\rm LR}$. The lower bound is obtained by considering the opposite case.}. Importantly, $\delta$ is essentially independent of the system size $N$, and stays approximately constant as the limit $N\to\infty$ is taken. 

By constructing a sequence of systems with increasing $N$, and using Eqs.~\eqref{eq:defscaledadditiveindex} and~\eqref{eq:multiplicativeindexvectoradd}, we obtain the relation
\begin{equation}
\nu(\mathbf{r}_1 + \mathbf{r}_2) = \nu(\mathbf{r}_1) + \nu(\mathbf{r}_2).
\end{equation}
In particular, we see that $\nu$ is a $\mathbb{Z}$-linear function of 2D lattice vectors. The coefficients $n_p(\br)$ in the sum over primes in Eq.~\eqref{eq:indexdecomp} are therefore integer-valued $\mathbb{Z}$-linear functions of $\br$, and so each may be written as
\begin{equation}
n_p(\mathbf{r}) = \frac{1}{2\pi} \mathbf{G}_p \cdot \mathbf{r},
\end{equation}
given as the inner product of $\br$ with some reciprocal lattice vector $\mathbf{G}_p$. 

Translationally invariant unitaries in two dimensions are therefore completely classified by a set of reciprocal lattice vectors $\{\mathbf{G}_p\}$, indexed by primes $p$. These determine the scaled additive index $\nu(\br)$ along any direction $\br$.  Conversely, by `measuring' $\nu(\br)$ for a unitary $U$ along some basis $\{\mathbf{r}_1,\mathbf{r}_2\}$ of the lattice, we can uniquely determine the vectors $\{\bG_p\}$ using the relation
\beq
\mathbf{G}_p = n_p(\mathbf{r_1}) \mathbf{b}_1 + n_p(\mathbf{r}_2) \mathbf{b}_2,
\eeq
where $\{\bb_1, \bb_2\}$ are reciprocal lattice vectors corresponding to $\{\br_1, \br_2\}$ (satisfying $\br_i\cdot\bb_j=2\pi\delta_{ij}$). Since this classification is discrete, it partitions the set of 2D translationally invariant unitaries into discrete equivalence classes.

We can define a representative unitary $V_{\{\bG_p\}}$ corresponding to a given set of vectors $\{\bG_p\}$ as follows. We first consider the set of translation vectors $\{\br_{{\rm tr},p}\}$, defined by
\beq
\mathbf{r}_{{\rm tr},p} = \frac{1}{2\pi} \left[(\mathbf{r}_1 \times \mathbf{r}_2) \times \mathbf{G}_p\right],\label{eq:rtr_2d}
\eeq
where it may be verified that $\mathbf{r}_{{\rm tr},p}$ is a vector in the direct lattice with basis $\{\mathbf{r}_1,\mathbf{r}_2\}$. For each value of $p$ with a nonzero reciprocal lattice vector $\bG_p$ there is a corresponding nonzero translation vector $\mathbf{r}_{{\rm tr},p}$. For each such value of $p$, we define a local Hilbert space with dimension $p$ on each site; the total Hilbert space is the tensor product of these Hilbert spaces over the complete 2D lattice.

The representative unitary $V_{\{\bG_p\}}$ acts independently on each $p$-dimensional factor of this Hilbert space as a translation with vector $\mathbf{r}_{{\rm tr},p}$. In other words, the unitary $V_{\{\bk_p\}}$ acts as a tensor product of one-dimensional shift operators, but where each factor $\sigma_p$ (corresponding to a different prime value of $p$) may shift in a different direction (and magnitude) $\mathbf{r}_{{\rm tr},p}$. By expressing a given vector $\mathbf{r}$ in the basis $\{\mathbf{r}_1, \mathbf{r}_2\}$ and exploiting the linearity of $\nu(\br)$, it may be verified that this representative unitary $V_{\{\bG_p\}}$ generates the expected value of the chiral unitary index $\nu(\br)$ for any choice of cut $\br$.

The set of reciprocal lattice vectors $\{\mathbf{G}_p\}$ characterizing a particular equivalence class of unitaries inherits a group structure under two products within the space of unitaries from the group structure of the GNVW index \cite{Gross:2012fm}. Under the sequential action of two unitaries $U_3 = U_2 \circ U_1$, the reciprocal lattice vectors add term-wise, $\{\mathbf{G}_{p,3} = \mathbf{G}_{p,1} + \mathbf{G}_{p,2}\}$. Similarly, if we consider the site-wise tensor product of two systems, with unitary $U_3 = U_1 \otimes U_2$, the reciprocal lattice vectors again add term-wise according to $\{\mathbf{G}_{p,3} = \mathbf{G}_{p,1} + \mathbf{G}_{p,2}\}$. In Appendix~\ref{app:further_details_2D} we show that an arbitrary set of translations can always be characterized by a set of reciprocal lattice vectors $\{\bG_p\}$ with $p$ prime. In Appendix~\ref{app:stability}, we show that edge behavior described by different $\bG$ is stable under local (in 2D) unitary deformations at the edge.

%%%%%
\subsection{2D boundaries of 3D unitary loops\label{sec:2D_boundaries_of3D}}
\begin{figure}[t]
\centering
\begin{overpic}[width=.9\linewidth]{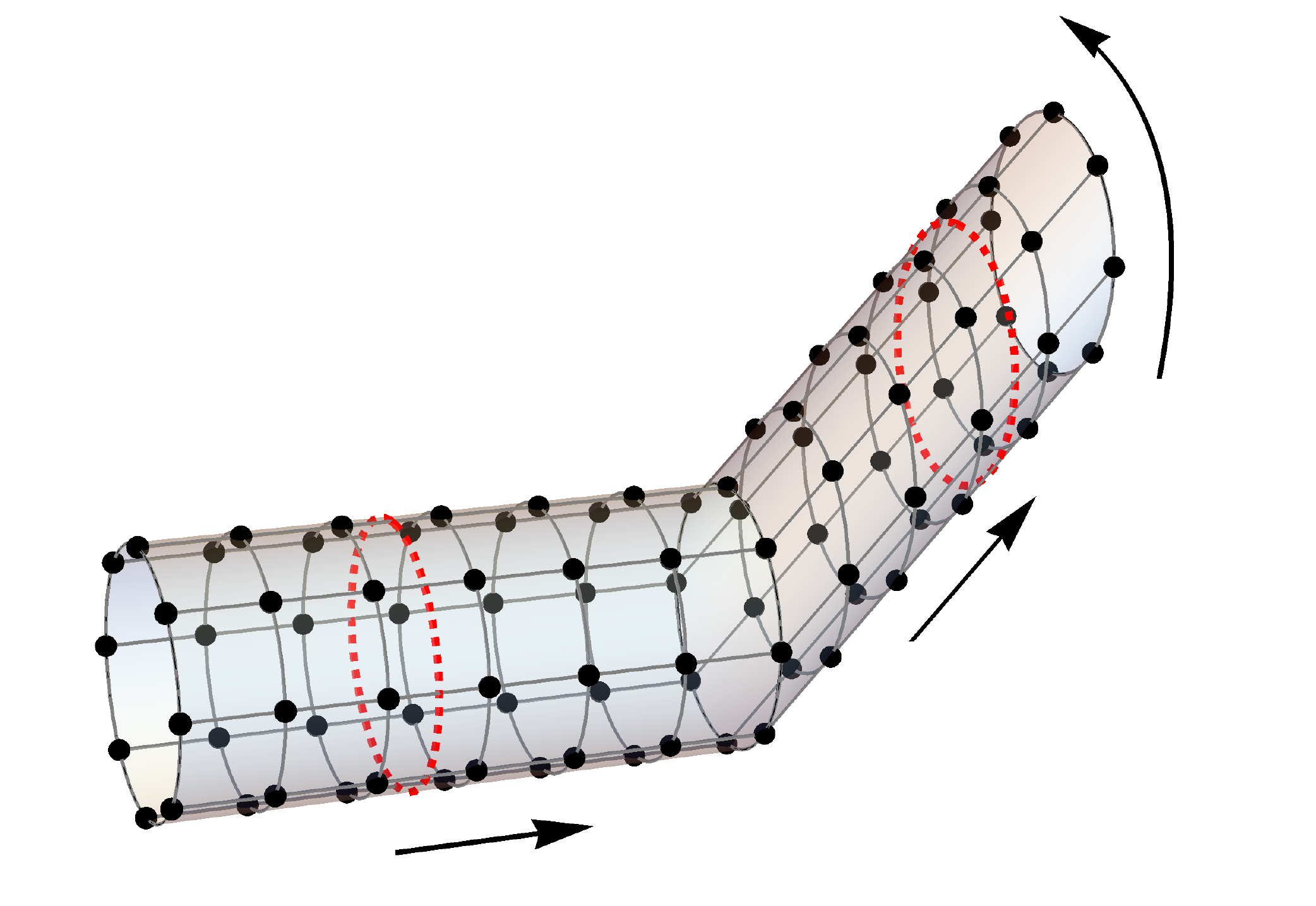}
\put(35,0){$\br_1'$}
\put(74,22){$\br_2'$}
\put(90,55){$\br$}
\end{overpic}
\caption{Illustration of the interface between two periodic systems with shared compact dimension along $\mathbf{r}$, and extended dimensions along $\mathbf{r}'_1$ and $\mathbf{r}'_2$. The chiral unitary index may be calculated by grouping sites along the $\mathbf{r}$ direction and dividing the resulting 1D system into two halves, $L$ and $R$. The dashed red lines show two possible cuts for dividing the system. The chiral unitary index is independent of the location of the cut.}
\label{fig:cylinderinterface}
\end{figure}

Since we are ultimately interested in 3D bulk drives, we now extend our discussion to 2D systems embedded in 3D. We take some translationally invariant 3D unitary loop drive $U_{\rm 3D}$, defined in $\mathbb{R}^3$, which may be used to generate a 2D effective edge unitary at any 2D boundary. If the boundary is a 2D plane, then the surface behavior falls into equivalence classes exactly as described above. To describe the behavior at more complicated boundary surfaces, however, we consider two 2D Bravais lattices $\mathcal{L}_1$ and $\mathcal{L}_2$, which intersect at a common 1D sublattice with primitive lattice vector $\mathbf{r}$. Each lattice $\mathcal{L}_{1/2}$ is spanned by the basis $\{\mathbf{r},\mathbf{r}'_{1/2}\}$. We define the complete boundary system to consist of sites belonging to $\mathcal{L}_1$ on one side of the common sublattice, and sites belonging to $\mathcal{L}_2$ on the other. The underlying bulk drive $U_{\rm 3D}$ is a translationally invariant unitary loop, and so this procedure defines an effective edge unitary $U_{\rm eff}$ that acts on the quasi-2D boundary system.

Since $\mathbf{r}$ is a vector in both $\mathcal{L}_1$ and $\mathcal{L}_2$, we can still define a periodic system by identifying the Hilbert spaces of sites displaced by $N\mathbf{r}$, as illustrated in Fig.~\ref{fig:cylinderinterface}. We can therefore again compute ${\rm ind}(U_{N,\mathbf{r}})$ by dividing the system along $\mathbf{r}$ into two halves, $L$ and $R$. However, the GNVW index is a local invariant \cite{Gross:2012fm}, and so the value of ${\rm ind}(U_{N,\mathbf{r}})$ is independent of the location of the dividing cut. In particular, far from the interface (where the axial dimension is either $\mathbf{r}'_1$ or $\mathbf{r}'_2$), a computation of ${\rm ind}(U_{N,\mathbf{r}})$ will yield the same result. By taking the limit $N\to\infty$, we see that the scaled index $\nu(\br)$ is consistent across the entire boundary.  

\begin{figure}[t]
\centering
\begin{overpic}[width=.5\linewidth]{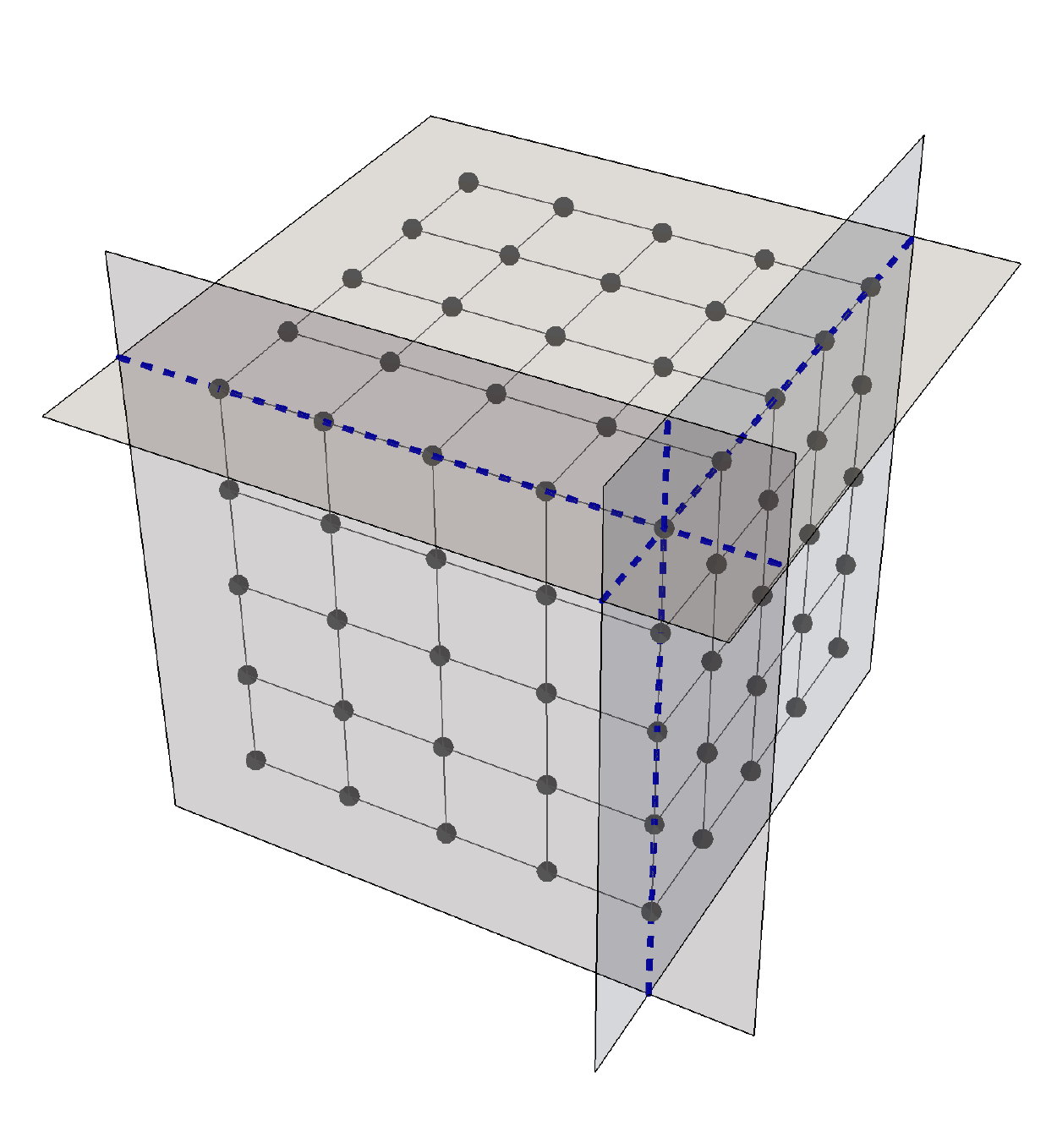}
\end{overpic}
\caption{Three pairwise-intersecting planar boundaries of a 3D system. The intersection between each pair of planes is spanned by a basis vector of the 3D lattice. Values of the chiral unitary index computed within different boundary planes must be consistent with each other and with linearity.}
\label{fig:cubeinterface}
\end{figure}

The arguments above apply to any pair of 2D planar boundaries which intersect at a line. For a 3D bulk unitary $U_{\rm 3D}$, we can find three pairwise-intersecting planar boundaries, in which the interface between each pair is a 1D sublattice spanned by a basis vector of the 3D lattice. This is illustrated in Fig.~\ref{fig:cubeinterface}. Since the scaled additive index is a locally-computed quantity, the values of $\nu(\br)$ computed within different 2D planar boundaries must be consistent with each other and with the linearity described in Eq.~\eqref{eq:multiplicativeindexvectoradd} (where $\br$ is now promoted to a lattice vector in 3D). Overall, this means that the effective edge behavior of a translationally invariant 3D loop drive is fully classified by a set of \emph{three-dimensional} reciprocal lattice vectors $\{\bG_p\}$, indexed by primes $p$. The scaled additive index $\nu(\br)$ is then specified for any 2D boundary and any 1D cut within this boundary (defined by three-dimensional lattice vector $\br$). Effective edge behaviors arising from different 3D bulk unitary loop drives may therefore be put into equivalence classes, each labeled by a set of 3D reciprocal lattice vectors $\{\mathbf{G}_p\}$ (with $p$ prime). In turn, each 3D unitary loop must have an edge behavior belonging to one of these classes, and the space of locally generated 3D loops inherits the classification.

Just as in the 2D case, we can define a representative effective edge unitary $V_{\{\bG_p\}}$ on a particular boundary which corresponds to a given set of vectors $\{\bG_p\}$. As before, we define the set of translation vectors $\{\br_{{\rm tr},p}\}$ through
\begin{equation}
\mathbf{r}_{{\rm tr},p} = \frac{1}{2\pi} \left[(\mathbf{r}_1 \times \mathbf{r}_2) \times \mathbf{G}_p\right],\label{eq:rtr_3d}
\end{equation}
but where $\br_1$, $\br_2$ and $\bG_{p}$ are now 3D vectors. The representative unitary $V_{\{\bG_p\}}$ acts as a translation with vector $\mathbf{r}_{{\rm tr},p}$ on a $p$-dimensional Hilbert space factor on each site. Other effective edge unitaries within the same class must be related to this representative edge unitary by a finite sequence of local 2D unitary evolutions. 

For a given equivalence class and boundary surface, the flow of information per unit cell across a cut in the direction of $\br$ is characterized by the index
\be
\nu(\br)&=&\frac{1}{2\pi}\sum_p\left(\bG_p\cdot\br\right)\log p.
\ee
As an example, Fig.~\ref{fig:surfacecut} shows the action of a simple effective edge unitary and gives the associated vectors $\br_{{\rm tr},p}$ and index $\nu(\br)$ for a choice of cut $\br$.

\begin{figure}[t]
\centering
\includegraphics[clip=true, trim = 40 80 80 20, scale=0.4]{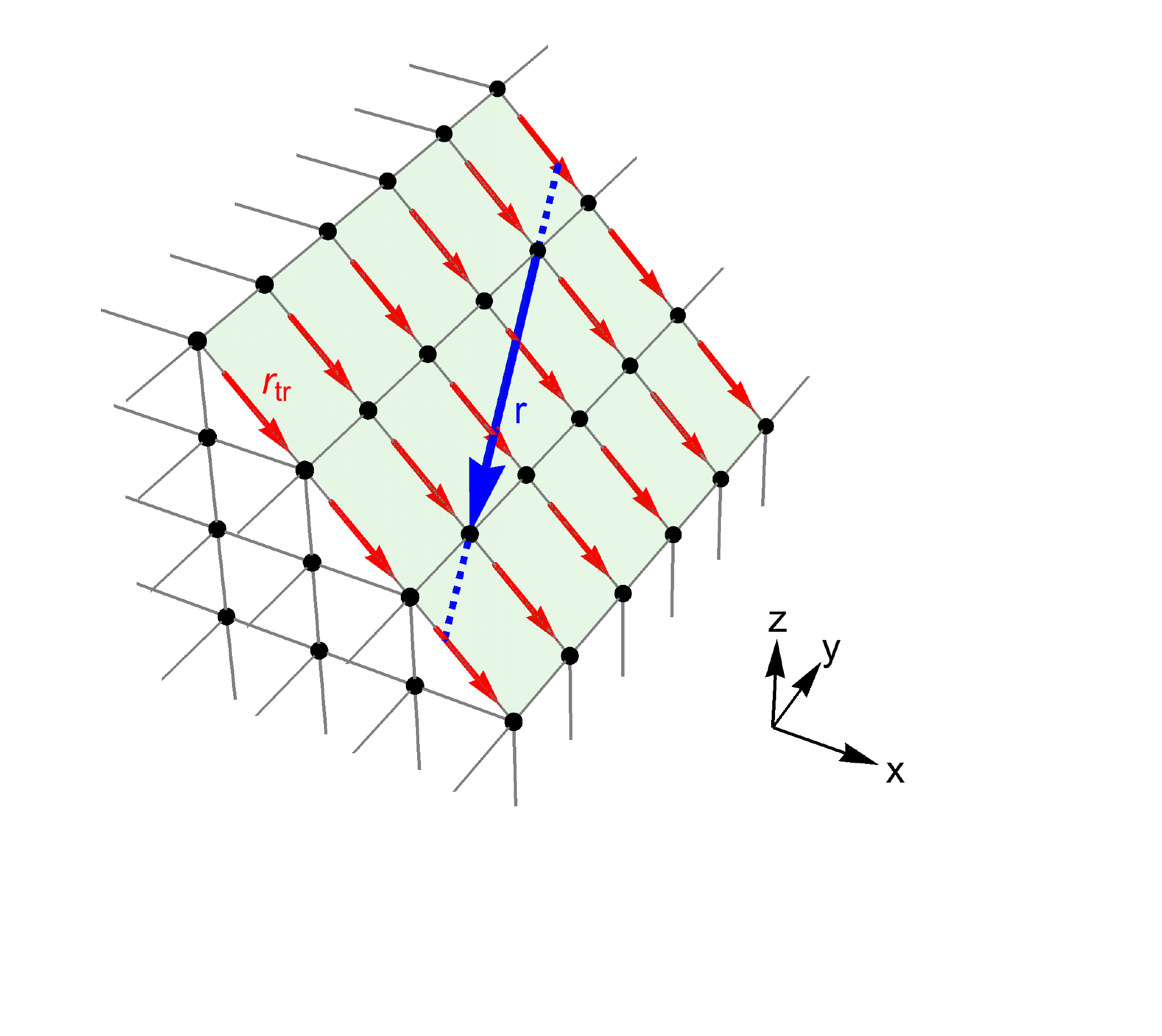}
\caption{The action of a simple effective edge unitary characterized by reciprocal lattice vector $\bG_2=(0,-2\pi,0)$ (with all other $\bG_i$ zero) in a surface with basis $\br_1=(1,0,-1)$ and $\br_2=(0,1,0)$ (note: on-site sublattices are not shown).  Within this surface, the unitary acts as a translation by vector $\br_{\rm tr,2}=(1,0,-1)$, indicated by red arrows. The blue dashed line indicates a 1D sublattice of this surface, with primitive lattice vector $\br=(1,-3,-1)$ indicated by the blue arrow. The flow of information across this cut per sublattice unit cell is quantified by the index $\nu(\br)=1/(2\pi)\left(\bG_2\cdot\br\right)\log 2=3\log 2$. See main text for details.}
\label{fig:surfacecut}
\end{figure}

In the 1D case, each equivalence class of effective 1D edge behaviors has a representative effective edge unitary which is generated by an exactly solvable 2D bulk exchange drive \cite{Po:2016iq,Harper:2017ce}. We will demonstrate that the representative edge unitary of each {two-dimensional} equivalence class may similarly be generated by an exactly solvable 3D bulk exchange drive.

%%%%%
\section{2D Bulk Exchange Drives}\label{sec:2D}
In the previous section, we obtained a classification of local 2D unitary operators with translational invariance, and argued that this provides an equivalent classification of \emph{bulk} Floquet phases in 3D. We showed that each equivalence class is characterized by an infinite set of reciprocal lattice vectors $\{\bG_p\}$, and that each class has a representative effective edge unitary $V_{\{\bG_p\}}$ that is a product of shift operators (or translations) by vectors given in Eq.~\eqref{eq:rtr_3d}. The next aim of this paper is to obtain a set of exactly solvable 3D bulk drives, known as `exchange drives', which may be used to generate these different representative edge behaviors. To aid the discussion, we first review exchange drives in two dimensions and show how they can be used to generate all possible 1D boundary behaviors. In Sec.~\ref{sec:3D}, we will naturally extend these ideas to exchange drives in 3D.

\subsection{Model triangular drive}\label{sec:modeldrive}

We first describe a simple four-step unitary loop drive in 2D which can be used as a building block for more general drives. This is a modification of the models introduced in Refs.~\onlinecite{Po:2016iq,Harper:2017ce}, which in turn build on the noninteracting drive of Ref.~\onlinecite{Rudner:2013bg}.

The model may be defined on any Bravais lattice with a two-site basis. For simplicity, however, we will assume that the lattice is square, has unit lattice spacing, and has both sites within each unit cell (labeled $A$ and $B$) coincident.\footnote{Note that this is in contrast to Refs.~\onlinecite{Rudner:2013bg,Po:2016iq,Harper:2017ce}, in which the lattice basis is nonzero.} On each site of each sublattice there is a finite, $d$-dimensional Hilbert space which, for concreteness, we may assume describes a spin. In this way, the state at a particular site may be written $\ket{\mathbf{r},a,\alpha}$, where $\mathbf{r}$ labels the lattice site, $a \in \{A,B\}$ labels the sublattice, and $\alpha\in\mathcal{H}_{\mathbf{r},a}$ labels the state within the on-site Hilbert space. A basis for many-body states is the tensor product of such states. 

Following Ref.~\onlinecite{Harper:2017ce}, we consider exchange operators of the form
\begin{equation}
U_{\br,\br'}^{\leftrightarrow}=\sum_{\alpha,\beta}\ket{\br,A,\beta}\otimes\ket{\br',B,\alpha}\bra{\br,A,\alpha}\otimes\bra{\br',B,\beta},
\end{equation}
which exchange the state on site $(\mathbf{r}, A)$ with the state on site $(\mathbf{r'},B)$. Note that $U_{\br,\br'}^{\leftrightarrow}$ is local if $\br$ and $\br'$ are nearby, and can therefore be generated by a similarly local Hamiltonian. 

In terms of this operator, we define the four-step drive $U_4 U_3 U_2 U_1$, where each $U_n$ takes the form
\begin{equation} \label{eq:exchangestep}
U_n = \bigotimes_{\br} U_{\br,\br + \bb_n}^{\leftrightarrow},
\end{equation}
with $\mathbf{b}_1 = -(\hat{\mathbf{x}} + \hat{\mathbf{y}})$, $\mathbf{b}_2 = -\hat{\mathbf{y}}$, $\mathbf{b}_3 = \mathbf{0}$, and $\mathbf{b}_4 = -\hat{\mathbf{x}}$. Each step of the drive is a product of exchange operations over disjoint pairs of sites separated by $\mathbf{b}_n$, as illustrated in Fig.~\ref{fig:bulkstepsmodified}(a).

\begin{figure}[t]
\centering
\begin{overpic}
[width=.48\linewidth, clip=true, trim= 0 -10 0 0]{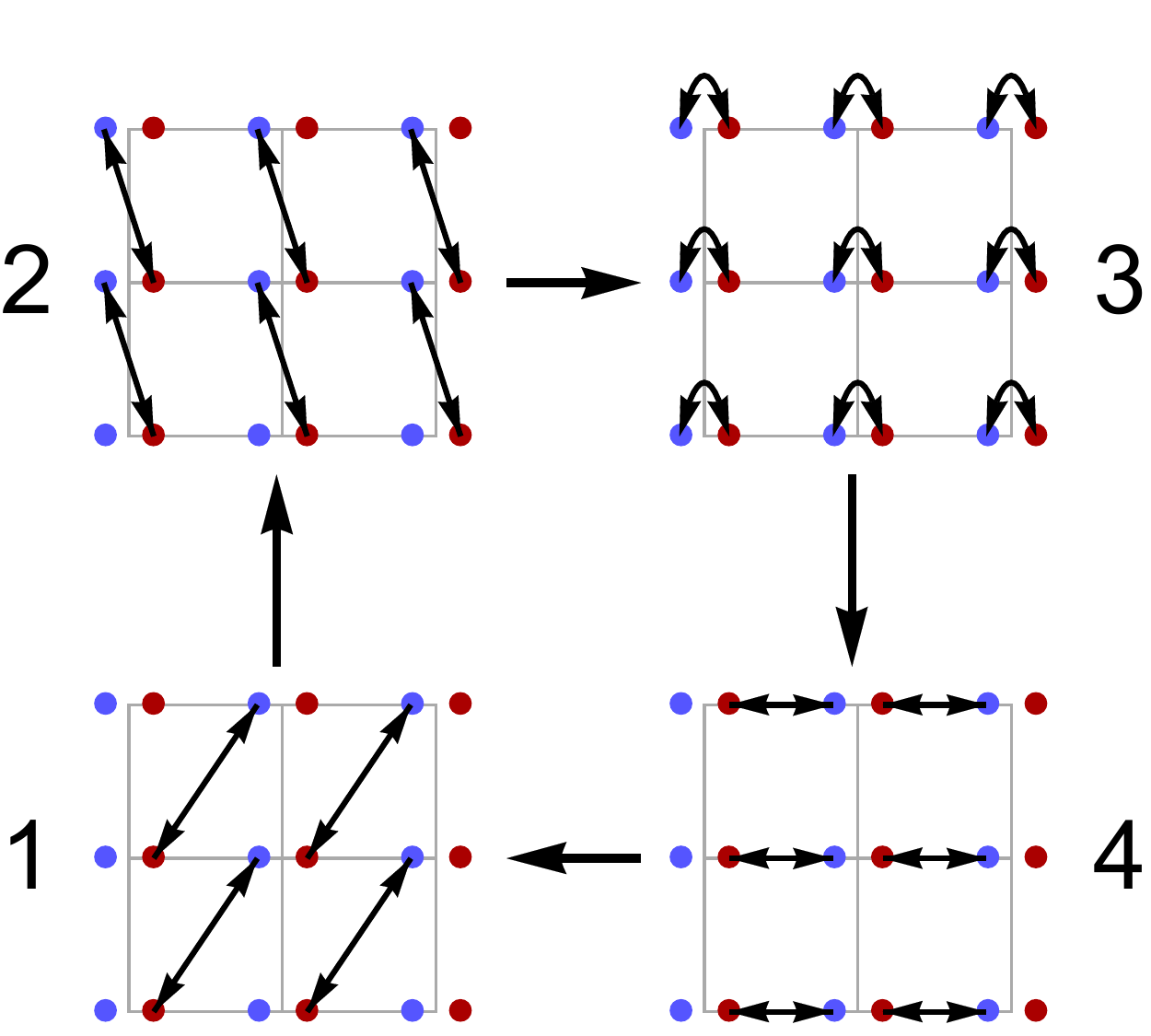}
\put(45,-7){(a)}
\end{overpic}
\hspace{5mm}
\begin{overpic}[width=.4\linewidth,clip=true,trim=0 -70 0 0]{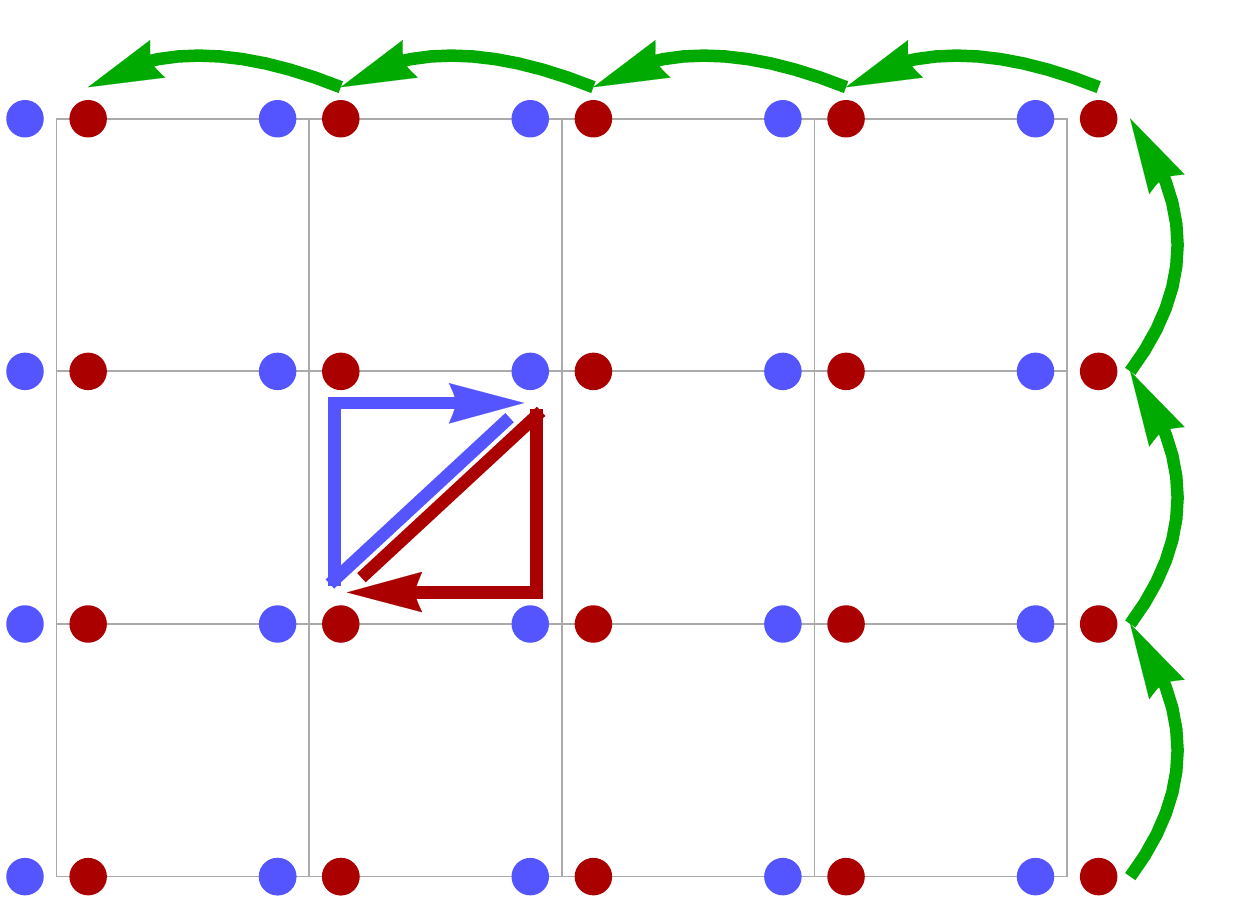}
\put(45,-7){(b)}
\end{overpic}
\caption{Illustration of the four-step exchange drive described in the main text. (a) The steps represent exchanges between nearby on-site states. $A$-sites are depicted in light blue and $B$-sites are depicted in dark red. (b) On-site states in the bulk follow a triangular loop path around a half-plaquette. On-site states at the edge are transported by an effective translation operator represented by the green arrows.}
\label{fig:bulkstepsmodified}
\end{figure}

Since the action of the unitary operator is invariant under lattice translations, we can obtain a complete picture of the drive by focusing on the evolution of a particular on-site component of a generic many-body state. We find that a state beginning at an $A$-site moves in a clockwise loop around the half-plaquette to its lower-left, while a state beginning at a $B$-site moves in a clockwise loop around the half-plaquette to its upper-right, as illustrated in Fig.~\ref{fig:bulkstepsmodified}(b). In this way, each on-site state in the bulk returns to its original position. Since this happens simultaneously for every site, the complete unitary operator acts as the identity on a generic many-body state in the bulk, and is therefore a unitary loop.

At the boundary of an open system, however, some exchange operations are forbidden, and the drive generates anomalous chiral transport~\cite{Po:2016iq,Harper:2017ce}. For the system in Fig.~\ref{fig:bulkstepsmodified}(b), the overall action of the drive is a translation of sublattice states counter-clockwise around the 1D edge: In other words, the effective edge unitary of the drive is a shift $\sigma_d$. By current conservation, this edge behavior must be the same along any edge cut, even if the cut is not parallel to a lattice vector. Note that it would be impossible to generate such a chiral translation with a local Hamiltonian in a purely 1D quantum system~\cite{Harper:2017ce}. 

\subsection{Bulk characterization of 2D exchange drives}
We now construct more general 2D exchange drives from this primitive triangular drive, and show that they may be used to generate all the different 1D edge behaviors (i.e. combinations of shifts) described in Sec.~\ref{sec:edgeclassification2d}.

In the process, we show that the geometry of a generic 2D exchange drive in the bulk is directly related to its edge behavior.

Assuming the same lattice structure as in Sec.~\ref{sec:modeldrive} without loss of generality, we consider a general drive with $2N$ steps, $U = U_{2N} \ldots U_{1}$, with individual steps of the drive being exchanges of the form of Eq.~\eqref{eq:exchangestep}. Each step is characterized by a Bravais lattice vector $\mathbf{b}_n$, which is the displacement between the exchanged sublattice sites directed from $A$ to $B$. After $n$ steps, a state beginning at an $A$-site will be displaced by 
\begin{equation}
\mathbf{d}_n = \sum^{n}_{m = 1} (-1)^{m+1} \mathbf{b}_m,
\end{equation}
where the minus sign arises because each step of the drive moves a state between sublattices. Similarly, a state beginning at a $B$-site will be displaced by $-\mathbf{d}_n$.

Throughout this paper, we are most interested in loop evolutions, which act as the identity in the bulk after a complete driving cycle. The requirement that the drive be a loop enforces the condition
\begin{equation}\label{eq:loopcondition}
\sum^{2N}_{n=1} (-1)^{n+1} \mathbf{b}_n = 0,
\end{equation}
so that the final displacement vector $\mathbf{d}_{2N}$ is zero. We define the signed area of a loop drive by
\begin{equation}\label{eq:signedarea}
A_s = \frac{1}{2 A_{\textrm{prim}}}\sum^{2N-2}_{n = 1} (-1)^{n} \left(\mathbf{d}_n \times \mathbf{b}_{n+1}\right) \cdot \hat{\mathbf{z}},
\end{equation} 
where $\hat{\mathbf{z}}$ is a unit vector perpendicular to the system and $A_{\textrm{prim}}$ is the area of a primitive triangle on the lattice ($A_{\textrm{prim}} = 1/2$ in our convention). Eq.~\eqref{eq:signedarea} calculates the net oriented area enclosed by a state beginning at a site in the bulk and following the complete evolution of the drive, in units of the primitive triangle area. In general, a drive may generate both positively and negatively oriented components, with counter-clockwise loops corresponding to positive areas (see Fig.~\ref{fig:areas}). As defined, the signed area $A_s$ is always an integer, which we will find gives a direct measure of the chiral transport at the edge.

\begin{figure}[t]
\centering
\includegraphics[width=.35\linewidth]{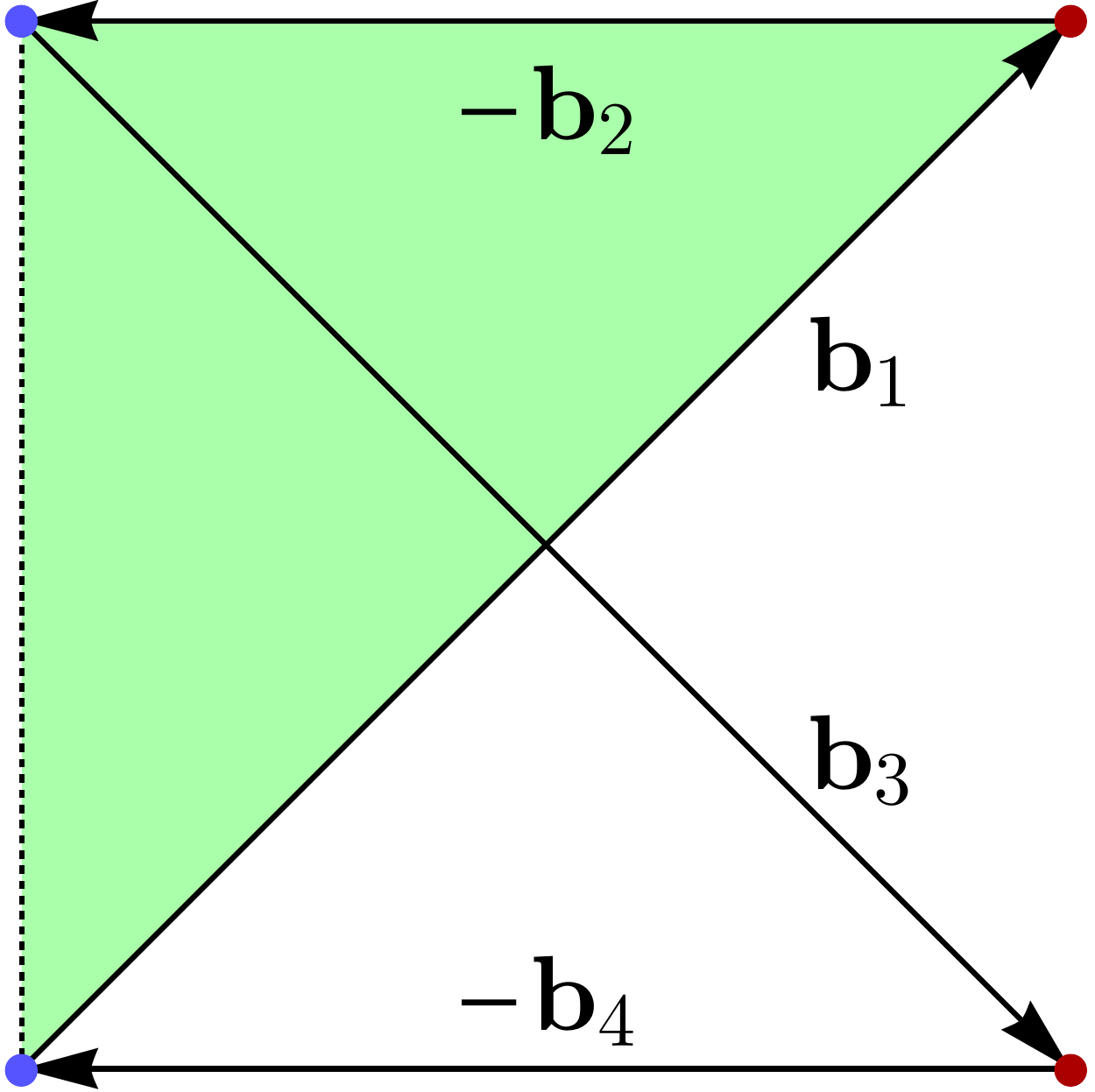}\hspace{8mm}
\includegraphics[width=.35\linewidth]{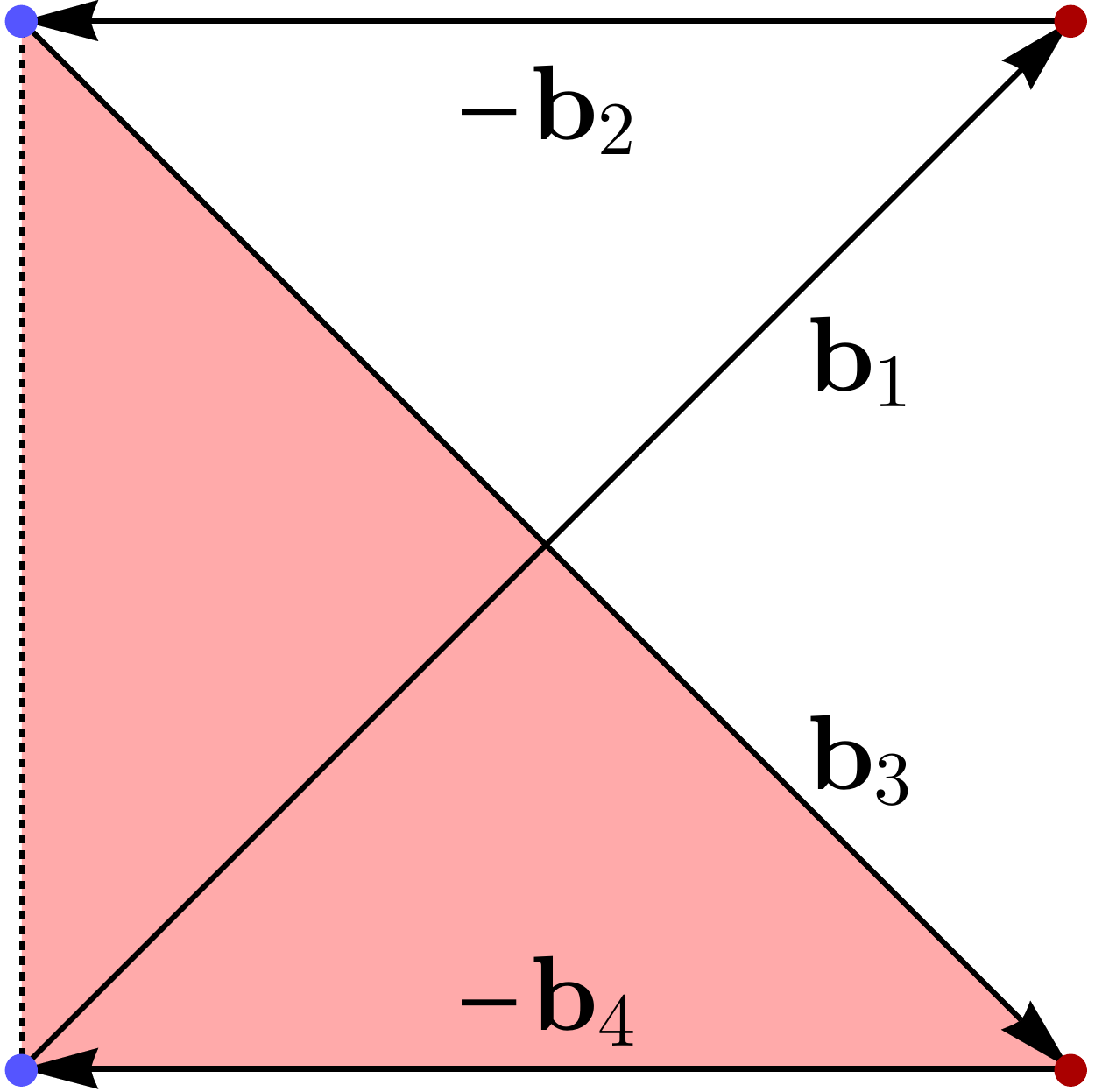}
\caption{Illustration of the signed area summands in Eq.~\eqref{eq:signedarea} for a four-step example drive with $\mathbf{b}_1 = \hat{\mathbf{x}} + \hat{\mathbf{y}}$, $\mathbf{b}_2 = \hat{\mathbf{x}}$, $\mathbf{b}_3 = \hat{\mathbf{x}} - \hat{\mathbf{y}}$, $\mathbf{b}_4 = \hat{\mathbf{x}}$. Since $N = 2$ there are $2N - 2 = 2$ terms in the sum. The signed area of the green (red) triangle represents the first (second) term in the sum and is equal to positive (negative) 1, scaled by the primitive triangle area. In total, this four-step exchange drive has $A_s = 0$.} 
\label{fig:areas}
\end{figure}
 
We now introduce operations that we will use to deform an exchange drive while preserving its signed area and (possibly anomalous) edge behavior. Proofs of these statements may be found in Appendix~\ref{app:modificationproofs}. First, we define a trivial drive to be an exchange drive in which states follow some exchange path and then exactly retrace this path in reverse, satisfying the condition $\mathbf{b}_{n} = \mathbf{b}_{2N - (n-1)}$. The signed area of a trivial drive is zero by construction.

Next, given an exchange drive, we note that we may continuously insert trivial drives at any point without affecting its signed area or edge properties. That is, given a general drive $U = U_{2N} \ldots U_{1}$ and a trivial drive $T$, the drive $U' = U_{2N} \ldots U_{n} T U_{n-1} \ldots U_1$ is continuously connected to $U$. One may also continuously deform an exchange drive by cyclically permuting its steps. These deformations do not affect the signed area of the drive and leave the transport at the edge unaffected, results which are proved in Appendix~\ref{app:modificationproofs}.\footnote{Note that these properties can be demonstrated without appealing to the edge classification discussed in Sec.~\ref{sec:edgeclassification}.}

Using the tools above, we can decompose a general loop exchange drive into a sequence of four-step triangular loop drives. To do this, we insert a trivial drive between each pair of steps that does not include the first or final step. The nature of the trivial drive inserted will depend on the parity of the step: After odd steps, we insert the trivial drive $U'_{2n+1} U'_{2n+1}$, where $U'_{2n+1}$ is an exchange step with $\mathbf{b}'_{2n+1} = \mathbf{d}_{2n+1}$. After even steps, we insert the trivial drive $U_{\rm OS} U'_{2n} U'_{2n} U_{\rm OS}$, where $U'_{2n}$ is an exchange step with $\mathbf{b}'_{2n} = \mathbf{d}_{2n}$ and where $U_{\rm OS}$ is an on-site exchange step with $\mathbf{b} = 0$. The extra swap in the even case acts to effectively transform even steps into odd steps. 

After these insertions, the modified drive can be partitioned into a sequence of $(2N - 2)$ four-step loop drives,
\begin{align}\label{eq:triangulardecomposition}
U' &= \ldots  U'_{4} U_{OS}  U_{4} U'_{3} \cdot  U'_{3} U_{3} U_{OS} U'_{2} \cdot U'_{2} U_{OS} U_2 U_1, \\ 
&= \ldots L_3 \cdot L_2 \cdot L_1,\nonumber
\end{align}
a process which is illustrated in Fig.~\ref{fig:triangledecomp}. It is simple to verify that each four-step loop drive in the partition has a minimum of one on-site swap step, and thus forms either a triangular drive or a trivial drive. Since the operations used to modify the drive preserve the signed area, the signed area of the complete drive may be written in terms of its components as
\begin{equation}
A_s(U) = A_s(U') = \sum_n{A_s(L_n)},
\end{equation}
where we have written $A_s(U)$ for the signed area of loop drive $U$, etc.

\begin{figure}[t]
\centering
\includegraphics[width=.65\linewidth]{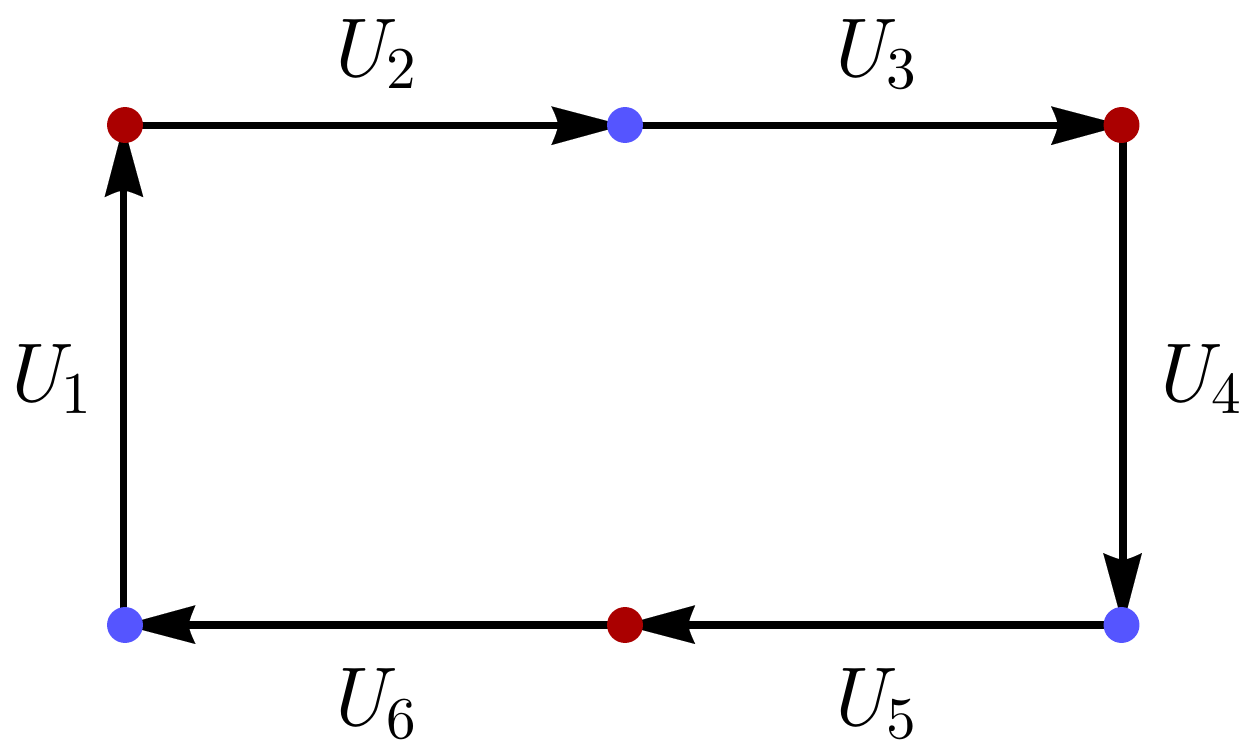}\\
\includegraphics[width=.85\linewidth]{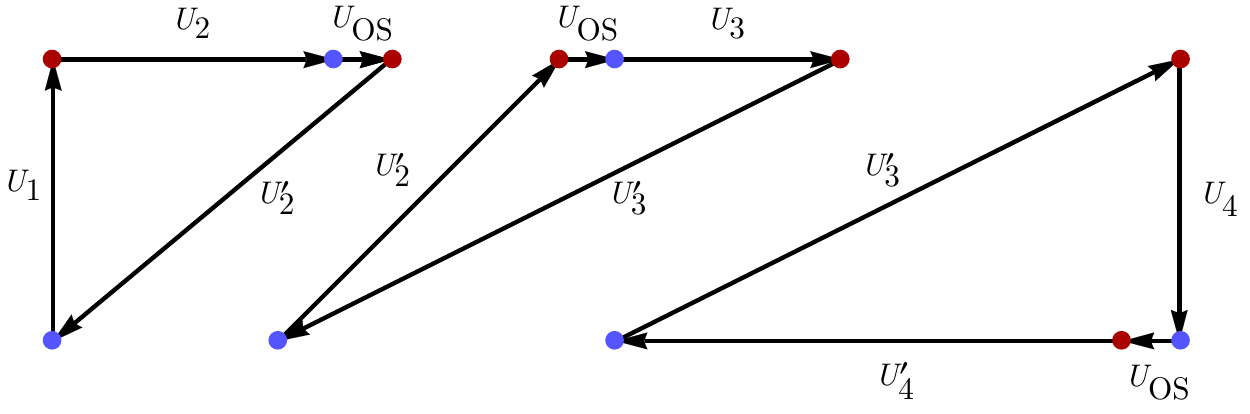}\\
\caption{Illustration of the triangular decomposition in Eq.~\eqref{eq:triangulardecomposition} for an example drive, with steps given by $U_1$ through $U_6$. Since $N = 3$ there are $2N - 2 = 4$ loops in the decomposition but the fourth loop is a trivial drive and we do not depict it here. For clarity, sublattice sites which are not reached by the state localized in the bottom left have been omitted from the figure.}
\label{fig:triangledecomp}
\end{figure}

%%%%
\subsection{Bulk-edge correspondence of 2D exchange drives}
The signed area of a generic drive may be related to its chiral transport at the edge. We define a primitive drive to be a four-step drive in which bulk states follow the path of a primitive triangle, such as the drive described in Sec.~\ref{sec:modeldrive}. Since a primitive drive is triangular, one of its steps must be an on-site swap with $\mathbf{b}_n = 0$. However, as cyclic permutations of loops are equivalent (see Appendix~\ref{app:modificationproofs}), we may assume without loss of generality that the on-site swap occurs on the third step. Therefore, we may equivalently define a primitive drive as a four-step loop drive in which $\{\mathbf{b}_2,\mathbf{b}_4\}$ form a basis for the Bravais lattice and $\mathbf{b}_3 = 0$.

Now, every primitive drive has an effective edge action equivalent either to the model drive in Sec.~\ref{sec:modeldrive} or to its inverse---in other words, its edge action is a shift $\sigma_d$ or a shift $\sigma_{d}^{-1}$. To see this, we perform an invertible orientation- and area-preserving transformation which maps the generic primitive drive (characterized by the basis $\{\mathbf{b}_2,\mathbf{b}_4\}$) onto the model primitive drive presented in Sec.~\ref{sec:modeldrive} or its inverse (characterized by the basis $\{-\by,-\bx\}$ or $\{-\bx,-\by\}$, respectively). The chosen transformation preserves the orientation of sites at the edge, and will map the edge behavior of the generic primitive drive directly onto that of the model primitive drive (or its inverse).

The decomposition of an exchange drive into triangular drives given in Eq.~\eqref{eq:triangulardecomposition} does not generally reduce the original drive to \emph{primitive} drives (as some of the constituent triangles will have areas larger than $A_{\textrm{prim}}$). However, we can use what we know about primitive triangles to deduce the effective edge behavior of a general (nonprimitive) triangular drive, $U_\triangle$. To see this, note that a drive of this form \emph{is} primitive on some number of \emph{sublattices} of the original lattice. This can be shown by considering the sublattice formed from the span of the vectors $\{\mathbf{b}_2,\mathbf{b}_4\}$ defining $U_\triangle$, on which the drive is clearly primitive. Other sublattices on which $U_\triangle$ is primitive can be obtained by translating the first sublattice by the basis vectors of the original lattice. This is illustrated in Appendix~\ref{app:proofarea}, where it is also demonstrated that states on different sublattices do not interact during the drive. 

We claim, and prove in Appendix~\ref{app:proofarea}, that the number of Bravais sublattices $N$ on which a four-step triangular drive is primitive is given by $N = |A_s|$, where $A_s$ is the signed area of that triangular drive. In this way, a four-step triangular drive acts on $|A|$ separate sublattices as either the model drive (if $\sgn(A_s) = -1$) or its inverse (if $\sgn(A_s) = 1$). Since the edge behaviors of the model drive and its inverse are shifts of unit magnitude with opposite chirality, the overall edge behavior of a general triangular drive is $A_s$ copies of the unit shift with the appropriate chirality.
 
Combining the discussions above, we find that the edge behavior of a general 2D translation-invariant exchange drive $U$ is characterized by its signed area in the bulk, $A_s(U)$, and is equivalent to $A_s$ copies of a unit chiral shift. Since the bulk motion of a primitive drive has the opposite chirality to its edge motion, a (negative) positive signed bulk area corresponds to (counter-)clockwise translation at the edge. By forming tensor products of exchange drives, each corresponding to a different on-site Hilbert space, all possible 1D boundary behaviors (with general form $\sigma_p\otimes\sigma_q^{-1}$) can be realized.

%%%%
\section{Bulk and edge behavior of 3D exchange drives}\label{sec:3D}
\subsection{Bulk-edge correspondence for 3D exchange drives}
We now extend the ideas of the previous section to translation-invariant exchange drives in 3D. As in the 2D case, an exchange drive may be defined on any 3D Bravais lattice $\mathcal{L}$ with a two-site basis $\{A,B\}$. For concreteness, we can assume the lattice is cubic and has two coincident sublattices. A boundary of such a system may then be obtained by taking a planar slice through $\mathcal{L}$ to expose some surface containing a 2D Bravais sublattice. As discussed in Sec.~\ref{sec:edgeclassification}, the edge behavior within this boundary can be characterized by the scaled unitary index $\nu(\br)$, defined across a cut in the direction of $\br$.

As before, we consider bulk exchange drives comprising $2N$ steps of the form in Eq.~\eqref{eq:exchangestep}, with each $\mathbf{b}_n \in \mathcal{L}$ now a 3D lattice vector. We recall that these exchange drives are loops, and that they involve local exchange operations that occur throughout the lattice simultaneously (due to translational invariance). Generalizing the signed area of Eq.~\eqref{eq:signedarea}, we claim that the bulk characterization of a 3D drive is given by the reciprocal lattice vector 
\begin{equation}
\bG = \frac{2 \pi}{V_r} \sum^{2N-1}_{n = 1} (-1)^{n} \left(\mathbf{d}_n \times \mathbf{b}_{n+1}\right),\label{eq:reciprocal_lattice_vector}
\end{equation}
where $V_r$ is the volume of the direct lattice unit cell. We will show that this bulk invariant $\bG$ is directly related to the set of reciprocal lattice vectors $\{\bG_p\}$ (introduced in Sec.~\ref{sec:2D_boundaries_of3D}) which characterize the edge behavior.

As in the 2D case, the bulk characterization may be justified by decomposing a general exchange drive into four-step triangular drives. While the decomposition in Eq.~\eqref{eq:triangulardecomposition} continues to hold, the triangular components are now generally not coplanar. Nevertheless, it follows from the arguments of the previous section that the vector $\bG$ for a general drive is the sum of the $\bG$ for each triangular drive in its decomposition. The decomposition therefore preserves the value of $\bG$, and we can understand the edge behavior of a general exchange drive by focusing on its triangular components.

As in 2D, a triangular drive may be defined by the vectors $\{\bb_1,\bb_2,\bb_3,\bb_4\}$, where a cyclic permutation has been chosen so that $\mathbf{b}_3 = \mathbf{0}$.  In this setup, the triangular drive lies in a plane we call the `triangle plane', which includes the vectors $\mathbf{b}_2$ and $\mathbf{b}_4$. We consider the action of this drive on some 2D boundary lattice, spanned by the basis $\{\br_1,\br_2\}$, which defines a `surface plane'. Neglecting the case where the surface plane and triangle plane are parallel (where the edge behavior is trivial), the intersection of these planes is a 1D Bravais sublattice generated by a primitive vector $\mathbf{a}_1 \in \mathcal{L}$. We can therefore choose an ordered basis $\{\mathbf{a}_1,\mathbf{a}_2, \mathbf{a}_3\}$ for $\mathcal{L}$, where $\{\mathbf{a}_1,\mathbf{a}_2\}$ span the triangle plane (and $\ba_3$ is any linearly independent primitive vector).  Note that $\mathbf{b}_2$ and $\mathbf{b}_4$ are not necessarily primitive vectors, and in general $(\mathbf{b}_2 \times \mathbf{b}_4) = A_s (\mathbf{a}_1 \times \mathbf{a}_2)$, where $A_s$ is the signed area discussed previously. According to Eq.~\eqref{eq:reciprocal_lattice_vector}, this triangular drive will have the characteristic reciprocal lattice vector
\begin{equation}
\bG = \frac{2 \pi}{V_r} (\mathbf{b}_2 \times \mathbf{b}_4) = \frac{2 \pi}{V_r} A_s (\mathbf{a}_1 \times \mathbf{a}_2).
\end{equation}

We now consider the edge behavior of this drive in the surface plane. We can write the ordered basis for the surface plane $\{\mathbf{r}_1, \mathbf{r}_2\}$ in terms of the basis of the 3D lattice as $\mathbf{r_1} = \mathbf{a}_1$ and $\mathbf{r_2} = D\mathbf{a}_2 + E\mathbf{a}_3$ (where $ D,E \in \mathbb{Z}$ are coprime). This surface is equivalently characterized by the outward-pointing reciprocal lattice vector
\begin{equation}
\mathbf{k}_s = \frac{2\pi}{V_r} (\mathbf{r}_1 \times \mathbf{r}_2).
\end{equation}
We claim that the edge behavior of the bulk triangular drive described above is a shift (or translation) within the surface lattice given by the direct lattice vector
\begin{equation} \label{eq:bulktranslationcalculation}
\mathbf{r}_{\textrm{tr}} = \frac{2\pi}{V_k} \left(\bk_s \times \bG \right)=\frac{1}{2\pi}\left[\left(\br_1\times\br_2\right)\times\bG\right],
\end{equation}
where $V_k$ is the volume of the 3D reciprocal lattice unit cell. For the triangular drive above this reduces to
\begin{align}
\mathbf{r}_{\textrm{tr}} &= \frac{1}{V_r}  (\mathbf{r}_1 \times \mathbf{r}_2) \times (\mathbf{b}_2 \times \mathbf{b}_4) = -A_s E  \mathbf{a}_1.
\end{align}

The fact that this is the correct edge behavior can be justified as follows: Since a triangular drive in 3D acts on a stack of parallel decoupled planes, the edge surface will host a 1D shift (or translation) for each triangle plane that terminates on it. The number of triangle planes terminating per unit cell of the 2D boundary sublattice is exactly $E$, and the factor of $A_s$ accounts for the fact that the triangular drive may not be primitive. The overall minus sign arises because the chirality of bulk motion is opposite that of edge motion. Thus, $\mathbf{r}_{\textrm{tr}}$ gives the effective edge translation correctly for a triangular drive and an arbitrary edge surface.

Since $\bG$ for a general exchange drive is given by the sum of $\bG$ over its triangular components, it follows that Eq.~\eqref{eq:bulktranslationcalculation} holds for \emph{any} 3D exchange drive. In this way, Eqs.~\eqref{eq:reciprocal_lattice_vector}~and~\eqref{eq:bulktranslationcalculation} completely characterize the bulk and edge behavior of a generic 3D translation-invariant exchange drive.

%%%%%%
\subsection{Products of 3D exchange drives}\label{sec:general_3D}
In Sec.~\ref{sec:edgeclassification} we found that 2D boundary behaviors form equivalence classes characterized by a set of reciprocal lattice vectors $\{\bG_p\}$. The representative edge behavior of given class is a product of translations by vectors $\br_{{\rm tr},p}$ (defined in Eq.~\eqref{eq:rtr_3d}), each acting on an on-site Hilbert space with prime dimension $p$. In order to generate the edge behavior of a general equivalence class, we should take a tensor product of the bulk exchange drives described above.

For the equivalence class with reciprocal lattice vectors $\{\bG_p\}$, we take a tensor product Hilbert space which has an on-site factor of dimension $p$ for each non-zero $\bG_p$. For each $p$-dimensional subspace, we choose a bulk exchange drive that is characterized by the reciprocal lattice vector $\bG=\bG_p$, as defined in Eq.~\eqref{eq:reciprocal_lattice_vector}. Any bulk exchange drive with this property is suitable, but for simplicity we can always choose a four-step triangular drive with the appropriate area. Then, by the reasoning above, the complete product drive will produce the required translation by lattice vector $\br_{\rm tr,p}$ for each $p$-dimensional subspace on an exposed surface. In other words, a product drive of this form in the bulk will reproduce the representative effective edge unitary of the equivalence class $V_{\{\bG_p\}}$ on an exposed boundary. In this way, 3D product drives of this form are representatives of the different equivalence classes of 3D dynamical Floquet phases.

%%%%%%
\section{Conclusion\label{sec:Discussion}}

In summary, we have studied 3D many-body Floquet topological phases with translational invariance but no other symmetry from the perspective of their edge behavior. We found that phases of this form fall into equivalence classes that are somewhat analogous to weak noninteracting topological phases. Members of each class share the same anomalous information transport at a 2D boundary, which is equivalent to a tensor product of shifts (or translations). The representative edge behavior in each equivalence class can be generated by an exactly solvable exchange drive in the bulk.

These equivalence classes capture all possible topological phases of this form whose edge behavior is equivalent to that of a tensor product of lower dimensional phases. To form a complete classification, however, there would need to exist no intrinsically 3D (`strong') Floquet topological phases (without symmetry). We expect this requirement to hold for the following reason: In 2D, the exchange drives which exhaust the possible Floquet topological phases in class A can be regarded as generalizations of the noninteracting system in Ref.~\onlinecite{Rudner:2013bg}. For an intrinsically 3D phase to exist in the interacting case, we would also expect it to have a similar noninteracting counterpart. However, in Ref.~\onlinecite{Roy:2017cs} it is shown that noninteracting Floquet systems in class A host only a trivial 3D phase. In this way, we conjecture that the classification is complete.

In classifying these phases, we developed a method for determining the effective edge behavior of an arbitrary exchange drive in 2D or 3D using geometric aspects of its action in the bulk. We found that 3D exchange drives may be characterized by an infinite set of reciprocal lattice vectors $\{\bG_{p}\}$, with $p$ indexing prime Hilbert space dimensions. These vectors may be calculated directly from the form of the bulk exchange drive, and completely characterize its edge behavior. The vectors $\{\bG_{p}\}$ share some similarities to weak invariants of static topological insulators~\cite{Halperin:1987gx,Fu:2007io,Fu:2007ei,Roy:2009kk,Roy:2010il}. However, in contrast to the static case, these 3D chiral Floquet phases cannot generally be viewed as stacks of decoupled 2D layers, since different Hilbert space factors within a tensor product may stack in different directions.

Our classification suggests a number of interesting directions for future work. A natural follow-up is to ask whether a similar classification can be obtained for 3D Floquet phases of fermions, as well as in systems with additional symmetries. In addition, by combining these phases with topological order, it may be possible to obtain analogues of the Floquet enriched topological phases found in Refs.~\onlinecite{Po:2017ex,Fidkowski:2017us}. Finally, it would be useful to obtain a rigorous proof of the conjecture that there are no inherently 3D Floquet topological phases in systems without symmetry, perhaps by developing an extension of the GNVW index to higher dimensions~\cite{Gross:2012fm}. 

\begin{acknowledgments}
We thank X.~Liu for useful discussions.  D. R., F. H., and R. R. acknowledge support from the NSF under CAREER DMR-1455368 and the Alfred P. Sloan foundation.
\end{acknowledgments}

\appendix
%%%%
\section{Further details on the classification of 2D effective edge unitaries\label{app:further_details_2D}}
In the main text, we argued that translationally invariant unitary operators in 2D form equivalence classes labeled by a set of reciprocal lattice vectors $\{\bG_p\}$ with $p$ prime. In this appendix, we show that generic (site-by-site) tensor products of such unitary operators always reduce to this form.
 
We first note that we can associate a reciprocal lattice vector $\bG_n$ with each term of such a tensor product, using the arguments of Sec.~\ref{sec:edgeclassification}. We can therefore initially characterize a general product drive by a set of pairs $\{(\bG_n,d_n)\}$, where $d_n$ labels the Hilbert space dimension of the $n$th term (but where the $d_n$ will not generally be prime or unique). To remove any repetition, if any two terms in the product have the same Hilbert space dimension $d_n=d_m$, we may replace the pairs $(\bG_{n},d_n)$ and $(\bG_{m},d_m)$ with the single pair $(\bG_{n}+\bG_{m},d_n=d_m)$. This is because the information transported is equivalent after the replacement, as may be demonstrated by regrouping the sites on the lattice using the methods of Ref.~\onlinecite{Harper:2017ce}. 
To reduce all the Hilbert space dimensions to primes, we may view any term for which $d_n$ is not prime as a tensor product of drives, according to its prime factorization. Explicitly, if $d_n$ = $2^{n_2} 3^{n_3} 5^{n_5} \ldots$, we can replace $\left(\bG_{n},d_n\right)$ with a term for every prime factor $\{(n_2 \bG_{n},2), (n_3 \bG_{n},3), (n_5 \bG_{n},5), \ldots \}$. Again, the information transported in the 2D boundary system is equivalent in both cases.

By performing this reduction to prime dimensions and further combining terms of the same dimension, we find that a general effective edge unitary can always be characterized by a set of reciprocal lattice vectors $\{\bG_{p}\}$, each corresponding to an on-site Hilbert space with prime dimension $p$. Using Eq.~\eqref{eq:indexdecomp}, the scaled chiral flow associated with this effective edge unitary can easily be calculated.

%%%
\section{Stability of 2D effective edge unitaries} \label{app:stability}
In Ref.~\onlinecite{Harper:2017ce} it is shown that a shift (translation) operator $\left(\sigma_p\right)^n$ acting on a 1D boundary cannot be continuously deformed to a different shift operator $\left(\sigma_p\right)^{n'}$ with $n\neq n'$ through a local unitary evolution restricted to the 1D system. This includes the trivial shift operator $\left(\sigma_p\right)^0=\mathbb{I}$. In this appendix we formally show that this stability continues to hold when applied to the more complicated boundary behavior (described by some reciprocal lattice vector $\bG$) that may act at a 2D boundary.

\begin{figure}[t!]
\centering
\includegraphics[scale=0.3]{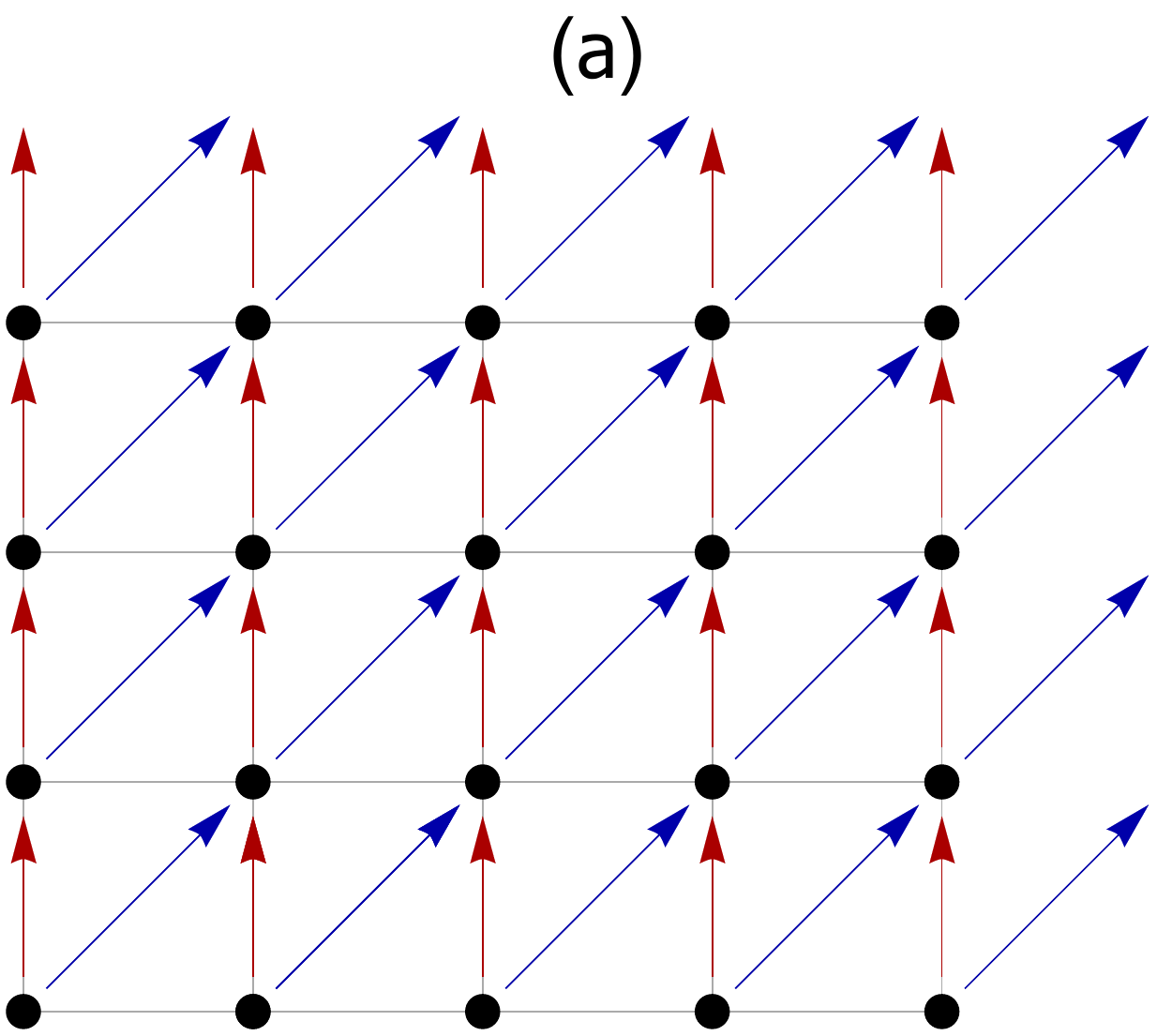}\hspace{5mm}
\includegraphics[scale=0.3]{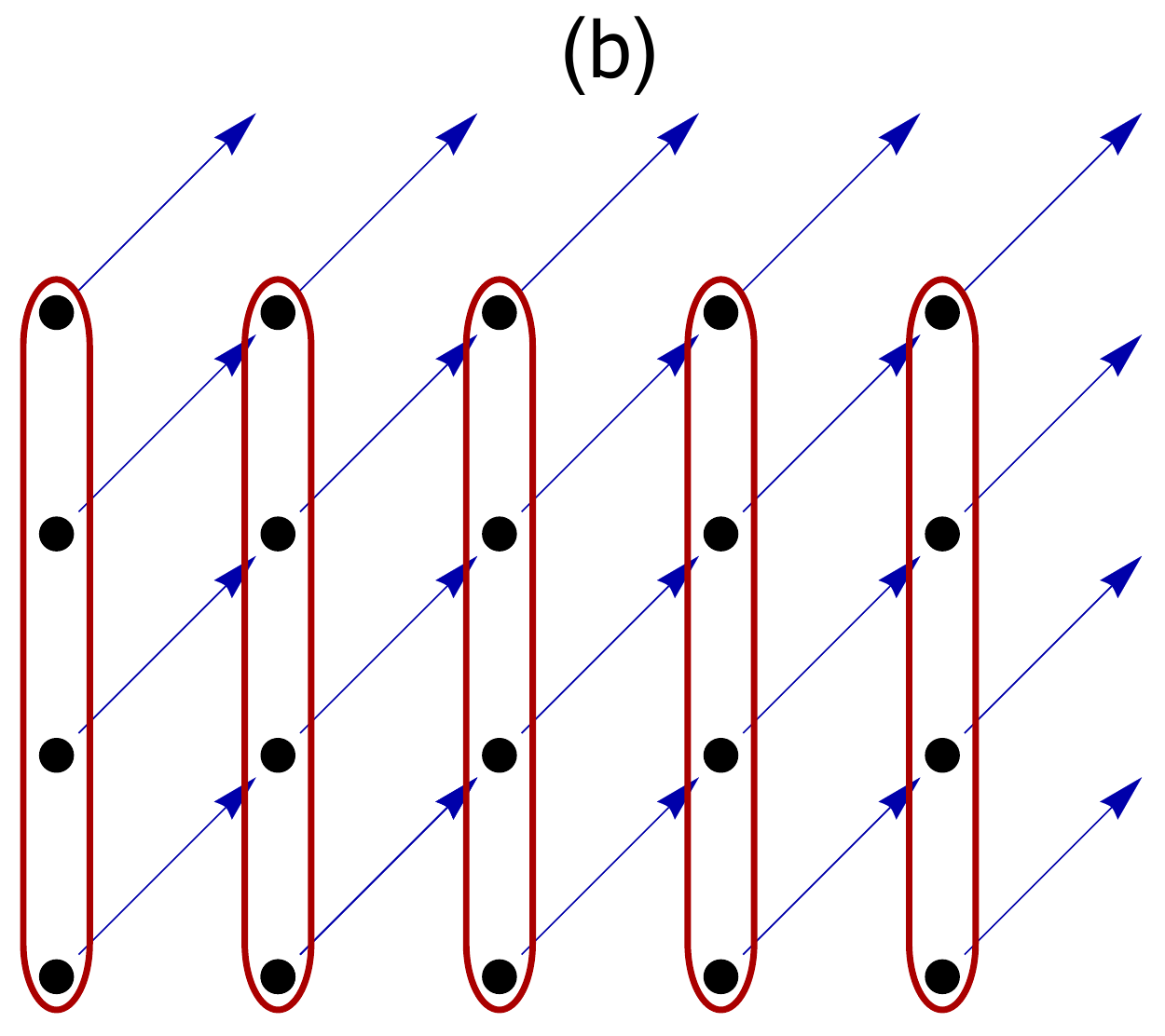}
\caption{Boundary behavior on a 2D surface described by different $\bG$ and $\bG'$ cannot be deformed into one another by local unitary transformations within the boundary. (a) Two different boundary behaviors, corresponding to distinct translation vectors $\mathbf{r}_{\textrm{tr}}$, are indicated (by red and blue arrows) on a 2D boundary. (b) This 2D behavior can be reduced to an effective 1D model by grouping lattice sites in the direction of one of the $\mathbf{r}_{\textrm{tr}}$. In this effective model, one effective edge unitary becomes a permutation of the on-site Hilbert space (within the red grouping) and the other becomes a translation in the horizontal direction combined with a permutation (blue arrows).} \label{fig:appC}
\end{figure}

We consider two 2D boundary systems (which we assume to be identical 2-tori with finite size) with the same on-site Hilbert space dimension $d$. [If these drives have different on-site Hilbert space dimensions or different sizes then they are trivially inequivalent.] On each system, we take unitaries characterized by inequivalent $\bG$ and $\bG'$, leading to distinct behavior. The action of each unitary is characterized by a translation vector within the 2D boundary surface, as argued in Sec.~\ref{sec:edgeclassification}. 

We now create an effective 1D system by grouping the sites on the 2-torus surface as illustrated in Fig.~\ref{fig:appC}. If the translation vectors of the two drives are not parallel, we group together the sites on the 2-torus that lie in the direction of the translation vector of (say) the second drive. If the translations of the two drives are parallel, we group together the sites of the 2-torus that lie along any chosen direction that is not parallel to the translation vectors. In both cases, we are left with two effective 1D edge behaviors that are topologically distinct~\cite{Po:2016iq,Harper:2017ce}. By the arguments of Ref.~\onlinecite{Harper:2017ce}, the two effective edge unitaries cannot be deformed into one another by a local 1D perturbation. This argument holds for each step in the sequence of boundary systems as their size is made infinite.

%%%%
\section{Continuous modifications of loop drives}\label{app:modificationproofs}
In this appendix we define transformations which may be carried out on a unitary exchange drive, and prove that these transformations leave the effective edge behavior unaltered.

\begin{prop}\label{prop:conj}
Given a 2D unitary loop $L$ which acts trivially in the bulk but nontrivially (i.e. as a shift) at the boundary of an open system and a unitary swap $U$ which interchanges pairs of states separated by a finite distance, we consider the sequence of drives $U^{-1} L U$. We claim that this sequence has the same edge behavior as $L$. 
\end{prop}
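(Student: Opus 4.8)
The plan is to show directly that the effective edge unitary of the composite drive $U^{-1} L U$ is obtained from that of $L$ by conjugation with a local unitary supported near the edge, and then to invoke the definition of edge-behavior equivalence. Since pre- and post-composition by a finite-depth local unitary is precisely the operation under which two effective edge unitaries are declared to have the same edge behavior, this establishes the claim \emph{without} appealing to the GNVW classification of Sec.~\ref{sec:edgeclassification}. First I would make the composite drive precise as a concatenation of three sub-drives applied in disjoint time windows---run $U$, then the loop $L$, then $U^{-1}$---so that the closed-system evolution operator is $U^{-1} L U$. Because $L$ acts as $\id$ in the bulk and $U$ is a swap of pairs separated by a finite distance, the conjugate $U^{-1} L U$ also acts as $\id$ outside a neighborhood of the boundary; hence the composite is again a loop and its effective edge unitary $U_{\rm eff}(U^{-1} L U)$ is well defined.

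The key step is to open the system and evaluate $U_{\rm eff}(U^{-1} L U)$. Since the Hamiltonian is piecewise constant and the opening procedure (dropping boundary-crossing terms) acts linearly on each window's Hamiltonian, opening commutes with both concatenation and inversion of the sub-drives. Writing $\tilde U$ for the open-system restriction of the swap $U$---itself a finite-range local unitary, with $\widetilde{U^{-1}} = \tilde U^{-1}$---the full-cycle open evolution of the composite factorizes window by window as $U_{\rm eff}(U^{-1} L U) = \tilde U^{-1}\, U_{\rm eff}(L)\, \tilde U$, where $U_{\rm eff}(L)$ is the full-cycle open evolution of the loop $L$.

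To conclude, I would observe that $U_{\rm eff}(L)$ equals $\id$ away from the boundary, so only the part of $\tilde U$ supported in the edge region is relevant to the conjugation, and there $\tilde U$ acts as a finite-depth local unitary. Thus $\tilde U^{-1} U_{\rm eff}(L) \tilde U$ differs from $U_{\rm eff}(L)$ only through pre- and post-composition by the local edge unitary $\tilde U^{\pm 1}$. This is exactly a finite-depth local unitary deformation at the edge, under which two effective edge unitaries are by definition assigned the same edge behavior. Hence $U^{-1} L U$ and $L$ share the same edge behavior.

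I expect the main obstacle to be the factorization step: one must argue carefully that the restriction (opening) of the composite drive equals the composite of the separately-opened sub-drives, i.e. that the full-cycle open evolution is $\tilde U^{-1}\, U_{\rm eff}(L)\, \tilde U$. This hinges on the sub-drives occupying disjoint time windows (so each window's Hamiltonian is opened independently) and on the locality of $U$ guaranteeing that $\tilde U$ is a bona fide finite-range unitary whose action near the boundary does not spoil the bulk-triviality of the conjugated loop. As a cross-check I would mention an equivalent homotopy argument: interpolating $U(s) = e^{-i s H_U}$, with $H_U$ the local generator of the swap, yields a continuous family of loop drives $U(s)^{-1} L U(s)$ connecting $L$ to $U^{-1} L U$, along which the discrete edge behavior cannot jump.
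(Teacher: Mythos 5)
Your proposal is correct and takes essentially the same route as the paper's proof: both use the disjoint-pair, finite-range structure of $U$ to disentangle its bulk action (which cancels against the trivial bulk action of $L$) from its action in an enlarged edge region, reducing the claim to conjugation of the effective edge unitary by a local unitary supported near the boundary, and both then invoke the fact that local 1D unitary evolutions cannot create or destroy chiral edge behavior. The only cosmetic differences are that you present this last invariance as definitional whereas the paper cites it as a result of Ref.~\cite{Harper:2017ce}, and your closing homotopy cross-check via $U(s)=e^{-isH_U}$ mirrors the paper's remark, immediately after its proof, that conjugation with $U(\theta)$ defines a continuous transformation within the space of unitary loops.
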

\begin{proof}
Since $U$ acts as a product over disjoint pairs of sites, we can disentangle its effects in the bulk from its effects on the edge. To do this, we extend the original edge region of $L$ to include sites which are connected to it by the action of $U$. In this way, we can write the composite unitary as the product of the identity in the bulk and a piece which acts at the edge, as shown in Fig.~\ref{fig:appA}. Now, considering the action restricted to this new edge region, the unitary acts as a product of local unitaries and a shift (translation) operator. However, no local 1D unitary evolution can generate (or destroy) chiral edge behavior~\cite{Harper:2017ce}, and so the conjugation with $U$ can have no effect on the chiral properties of $L$.

\begin{figure}[t]
\centering
\includegraphics[scale=0.3]{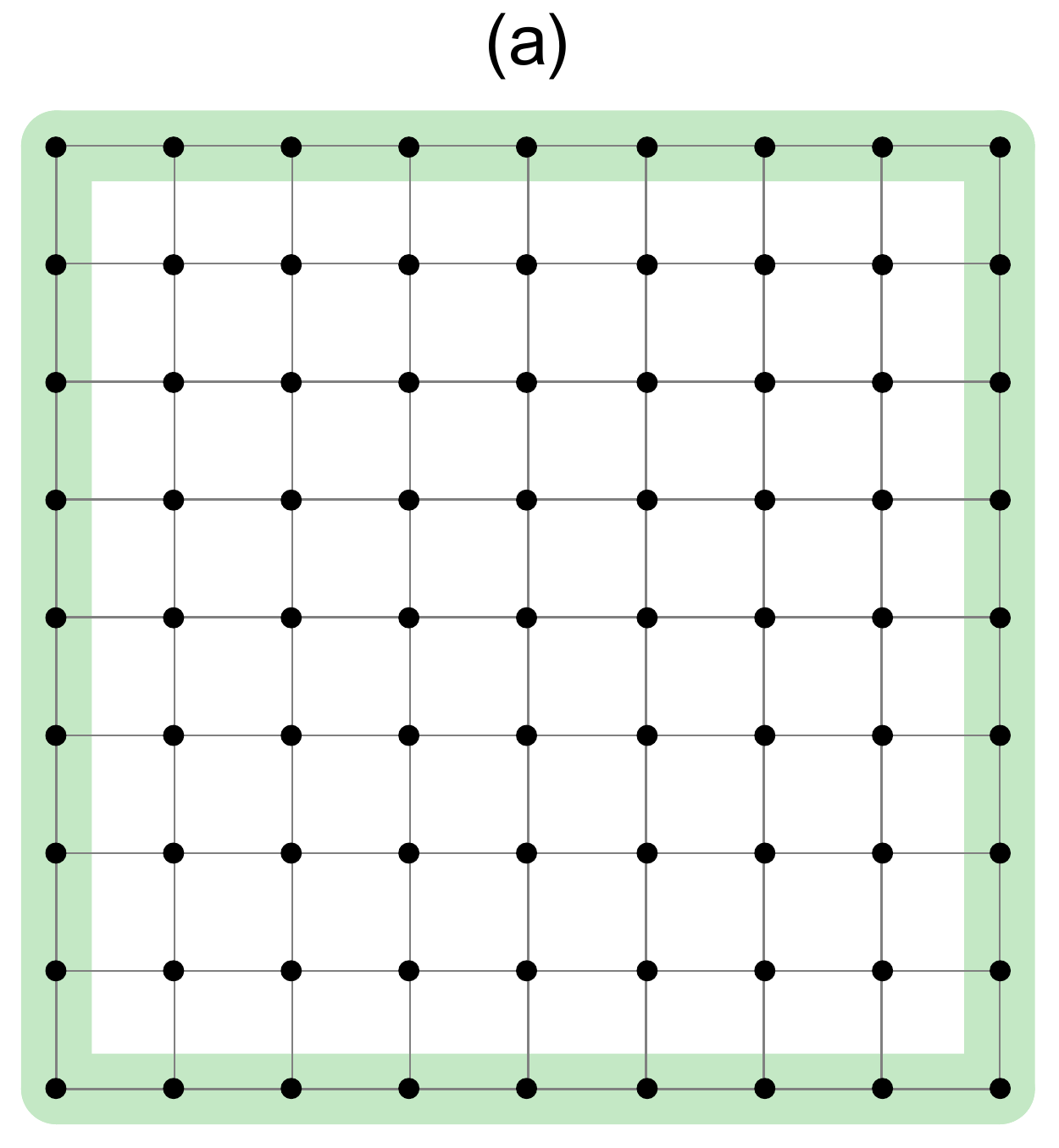}\hspace{5mm}
\includegraphics[scale=0.3]{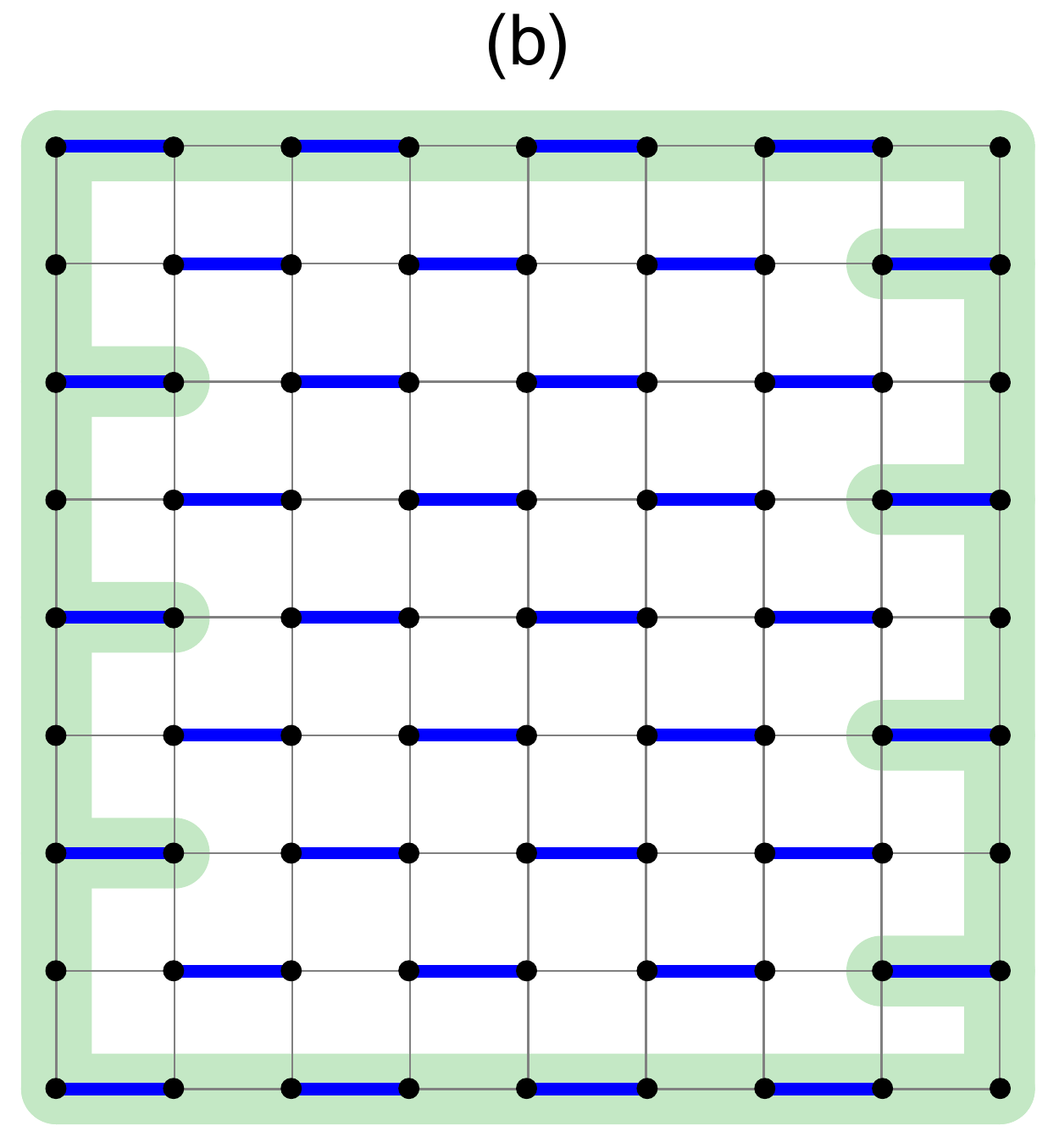}
\caption{(a) A unitary loop $L$ acts trivially in the bulk but may act nontrivially in a quasi-1D edge region located near a boundary (green shaded region). (b) Conjugating the unitary loop $L$ with a product of disjoint pairwise swaps (thick blue lines) may connect bulk sites to the edge region. We define a new quasi-1D edge region which includes these former bulk sites (green shaded region). See main text for details.}
\label{fig:appA}
\end{figure}

An alternative point of view is that conjugation with $U$ acts as a local basis transformation of the Hilbert space restricted to the edge. A local basis transformation of a quasi-1D system cannot change the global properties of the drive.

\end{proof}

Note that $U$ is an exchange operator and can be continuously connected to the identity, and so we can define $U(\theta)$ such that $U(0) = \mathbb{I}$ and $U(1) = U$. We therefore see that conjugation with $U(\theta)$ defines a continuous transformation within the space of unitary loops. Further, note that this composite unitary $U^{-1} L U$ is also a loop as it is trivial in the bulk.

\begin{prop}\label{prop:nconj}
Given a unitary loop $L$ and a finite sequence of local unitary swaps $\{U_1, \ldots, U_N\}$, then the composite unitary operator $(U_1 \ldots U_N)^{-1} L (U_1 \ldots U_N)$ has the same edge behavior as $L$.
\end{prop}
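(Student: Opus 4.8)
The plan is to prove the statement by induction on the number of swaps $N$, using Proposition~\ref{prop:conj} as both the base case and the driver of the inductive step. The single ingredient that makes the induction self-sustaining is already recorded in the remark following Proposition~\ref{prop:conj}: conjugating a unitary loop by a local swap produces another unitary loop, because $L$ is the identity deep in the bulk and the flanking swaps there cancel. Consequently, the conjugated operator again acts trivially in the bulk and as a shift at the boundary, so it continues to meet the hypotheses of Proposition~\ref{prop:conj}, which is precisely what allows the proposition to be reapplied.

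Concretely, I would first peel off the innermost conjugation, rewriting
\begin{equation}
(U_1 \cdots U_N)^{-1} L (U_1 \cdots U_N) = U_N^{-1} \cdots U_2^{-1}\left(U_1^{-1} L U_1\right) U_2 \cdots U_N,
\end{equation}
and then define the chain of operators $L_0 = L$ and $L_k = U_k^{-1} L_{k-1} U_k$, so that the operator of interest is exactly $L_N$. The inductive claim to carry along the chain is that each $L_k$ is a unitary loop with the same shift edge behavior as $L$. Assuming this for $L_{k-1}$, the remark above guarantees that $L_{k-1}$ is a loop with shift-type edge action, so Proposition~\ref{prop:conj} applies to the single swap $U_k$ and yields that $L_k = U_k^{-1} L_{k-1} U_k$ has the same edge behavior as $L_{k-1}$. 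Transitivity along the chain then gives that $L_N$ shares the edge behavior of $L_0 = L$, which is the assertion.

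The main point requiring care is not the logic of the induction but the bookkeeping of the edge region across steps. Each conjugation by $U_k$ may, as in Fig.~\ref{fig:appA}, draw formerly bulk sites into the quasi-1D boundary region; I would track this by enlarging the edge region by the (finite) range of $U_k$ at each step. Since there are only finitely many swaps, each of finite range, the accumulated edge region remains quasi-1D, the GNVW/shift index stays well-defined, and Proposition~\ref{prop:conj} may be applied unchanged at every stage. I expect this finiteness bookkeeping to be the only delicate point, with all the substantive content already supplied by the single-swap case.
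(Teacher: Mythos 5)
Your proof is correct and is essentially the paper's own argument made explicit: the paper's proof of this proposition is literally ``one repeats the argument in Proposition~\ref{prop:conj} $N$ times,'' which is exactly your induction on the chain $L_k = U_k^{-1} L_{k-1} U_k$, using the remark that conjugation preserves the loop property so Proposition~\ref{prop:conj} can be reapplied. Your bookkeeping of the finitely growing edge region is a reasonable elaboration of a detail the paper leaves implicit, not a departure from its route.
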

\begin{proof}
One repeats the argument in Proposition~\ref{prop:conj} $N$ times.
\end{proof}

\begin{prop}\label{prop:trivial}
Any drive $T$ comprising a sequence of unitary swaps $(U_1 \ldots U_N)$ followed by the inverse swaps in reverse order $(U_N^{-1} \ldots U_1^{-1})$ has trivial effective edge behavior.
\end{prop}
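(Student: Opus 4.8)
The plan is to recognize $T$ as a conjugation of the \emph{trivial} (identity) loop by a sequence of swaps, so that the statement follows at once from Proposition~\ref{prop:nconj}. Concretely, let $W = U_1 \ldots U_N$ denote the product of swaps applied in the first half of the drive; the second half applies the inverses in reverse order, which is exactly $W^{-1} = U_N^{-1} \ldots U_1^{-1}$. Reading the palindrome from its center outward, I would observe that $T$ has the structure $W^{-1}\,\mathbb{I}\,W$, with the identity loop $L = \mathbb{I}$ sitting between the forward and backward halves: the innermost pair $U_N^{-1}U_N$, and then successively each adjacent pair $U_k^{-1}U_k$, telescopes to the identity in the bulk, leaving nothing acting in the interior. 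In other words, $T$ is precisely the composite unitary appearing in Proposition~\ref{prop:nconj}, specialized to $L = \mathbb{I}$ with conjugating swaps $\{U_1,\ldots,U_N\}$.

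With this identification the conclusion is immediate. Since the identity loop acts trivially both in the bulk and at any boundary, Proposition~\ref{prop:nconj} guarantees that $T$ has the same (trivial) effective edge behavior as $\mathbb{I}$. Equivalently, one could argue inductively by peeling off one swap at a time: the nested drive $U_k^{-1}(\cdots)U_k$ is a single-swap conjugation of an inner trivial drive, and each such conjugation preserves edge behavior by Proposition~\ref{prop:conj}. After $N$ such steps one reaches the base case $U_N^{-1}U_N = \mathbb{I}$, which is manifestly trivial at the edge. Both routes are the same argument, the first simply packaging the induction into a single invocation of Proposition~\ref{prop:nconj}.

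The only point requiring care is the bookkeeping of step ordering: one must confirm that the central object being conjugated is genuinely a loop, i.e.\ trivial in the bulk, so that the hypotheses of Propositions~\ref{prop:conj} and~\ref{prop:nconj} are met. This holds by construction, because the palindromic cancellation $U_k^{-1}U_k = \mathbb{I}$ leaves no residual action in the interior and each $U_k$ is a local swap of exactly the form required by those propositions. I therefore do not expect any substantive obstacle; the statement is essentially a corollary of the preceding propositions, and the main (modest) subtlety is aligning the temporal ordering of the drive steps with the operator-product ordering implicit in $W^{-1}\mathbb{I}\,W$.
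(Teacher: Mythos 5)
Your proposal is correct and matches the paper's own proof, which is exactly the one-line observation that $T$ is the composite $(U_1 \ldots U_N)^{-1} L\, (U_1 \ldots U_N)$ of Proposition~\ref{prop:nconj} specialized to $L = \mathbb{I}$. Your inductive unrolling via Proposition~\ref{prop:conj} is not a different route either, since the paper proves Proposition~\ref{prop:nconj} precisely by iterating Proposition~\ref{prop:conj} $N$ times.
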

\begin{proof}
This follows directly by Proposition~\ref{prop:nconj} if we take $L$ to be $\mathbb{I}$.
\end{proof}

Note that $T$ above is a general `trivial' drive as defined in Sec.~\ref{sec:2D}. We can therefore continuously append or remove trivial drives from a sequence of loop drives without affecting the effective edge behavior. 

\begin{prop}\label{prop:cycle}
Given a unitary loop $L$ which is the product of a sequence of local unitary swaps $L = U_1 \ldots U_N$, then any cyclic permutation of the steps of $L$ is a loop with the same edge behavior.
\end{prop}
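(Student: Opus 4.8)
The plan is to exploit the fact that a cyclic permutation of the ordered product $L = U_1 \ldots U_N$ is nothing more than a conjugation of $L$ by a partial product of its own swaps, so that Proposition~\ref{prop:nconj} applies directly. Concretely, I would first record the algebraic identity
\[
U_{k+1} U_{k+2} \ldots U_N\, U_1 U_2 \ldots U_k = \left(U_1 \ldots U_k\right)^{-1} L \left(U_1 \ldots U_k\right),
\]
valid for any $1 \le k \le N$, which is verified simply by cancelling the adjacent inverse pairs $U_i^{-1} U_i$ in the middle of the right-hand side. The left-hand side is exactly the cyclic permutation of the steps of $L$ by $k$ positions, and since every cyclic permutation of the sequence arises in this way for some $k$, it suffices to treat this single family.

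Next I would observe that the conjugating operator $U_1 \ldots U_k$ is a finite sequence of local unitary swaps, being simply the first $k$ steps of the given drive. Proposition~\ref{prop:nconj} then applies verbatim, with the role of the swap sequence played by $\{U_1, \ldots, U_k\}$, and I would conclude immediately that the conjugated drive $\left(U_1 \ldots U_k\right)^{-1} L \left(U_1 \ldots U_k\right)$ has the same effective edge behavior as $L$. By the identity above, this conjugated drive is precisely the cyclically permuted drive, so its edge behavior is unchanged.

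Finally I would check that the cyclic permutation is itself a loop. This follows because $L$ acts as the identity in the bulk, so conjugation by any unitary leaves the bulk action trivial---exactly the observation already used in the remark following Proposition~\ref{prop:conj}. Hence the permuted drive is trivial in the bulk and qualifies as a unitary loop. I do not anticipate a genuine obstacle here: the entire content is the recognition that cyclic reordering equals conjugation, after which the earlier propositions do all the work. The only point requiring mild care is confirming that the conjugating factor $U_1 \ldots U_k$ consists of admissible local swaps, so that Proposition~\ref{prop:nconj} is genuinely applicable; this is immediate since each $U_i$ is a local swap by hypothesis.
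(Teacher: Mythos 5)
Your proof is correct and follows essentially the same route as the paper: both recognize that a cyclic permutation of $L = U_1 \ldots U_N$ is a conjugation of $L$ by a partial product of its own swaps (you conjugate by $U_1 \ldots U_k$; the paper by $V = (U_n \ldots U_N)^{-1}$, which differs only in which block of steps plays the conjugating role) and then invoke Proposition~\ref{prop:nconj}. Your explicit check that the permuted drive remains a loop matches the remark following Proposition~\ref{prop:conj}, so nothing is missing.
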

\begin{proof}
Consider a cyclic permutation of $L$, $L' = U_n U_{n+1} \ldots U_N U_1 \ldots U_{n-1}$. Construct the unitary $V = (U_{n} U_{n+1} \ldots U_N)^{-1}$. Then $V^{-1} L V$ is the cyclic permutation we are considering and by Proposition~\ref{prop:nconj} has the same edge behavior as $L$. 
\end{proof}

\section{Nonprimitive triangular drives}\label{app:proofarea}
In this appendix, we show that the number of independent sublattices on which a triangular drive is primitive is equal to the magnitude of its signed area (in units of the primitive triangle area). Consider an arbitrary four-step triangular drive defined by vectors $\{\bb_1,\bb_2,\bb_3,\bb_4\}$, which we take without loss of generality to have $\mathbf{b}_3 = 0$. If the triangle is not primitive, there are additional Bravais lattice points on the edges or contained within the interior of the triangle, the number of which we denote by $e$ and $i$ respectively. By specifying an edge of the triangle, we may form a parallelogram over this edge as illustrated in Fig.~\ref{fig:appB}.

\begin{figure}[t]
\centering
\includegraphics[scale=0.5]{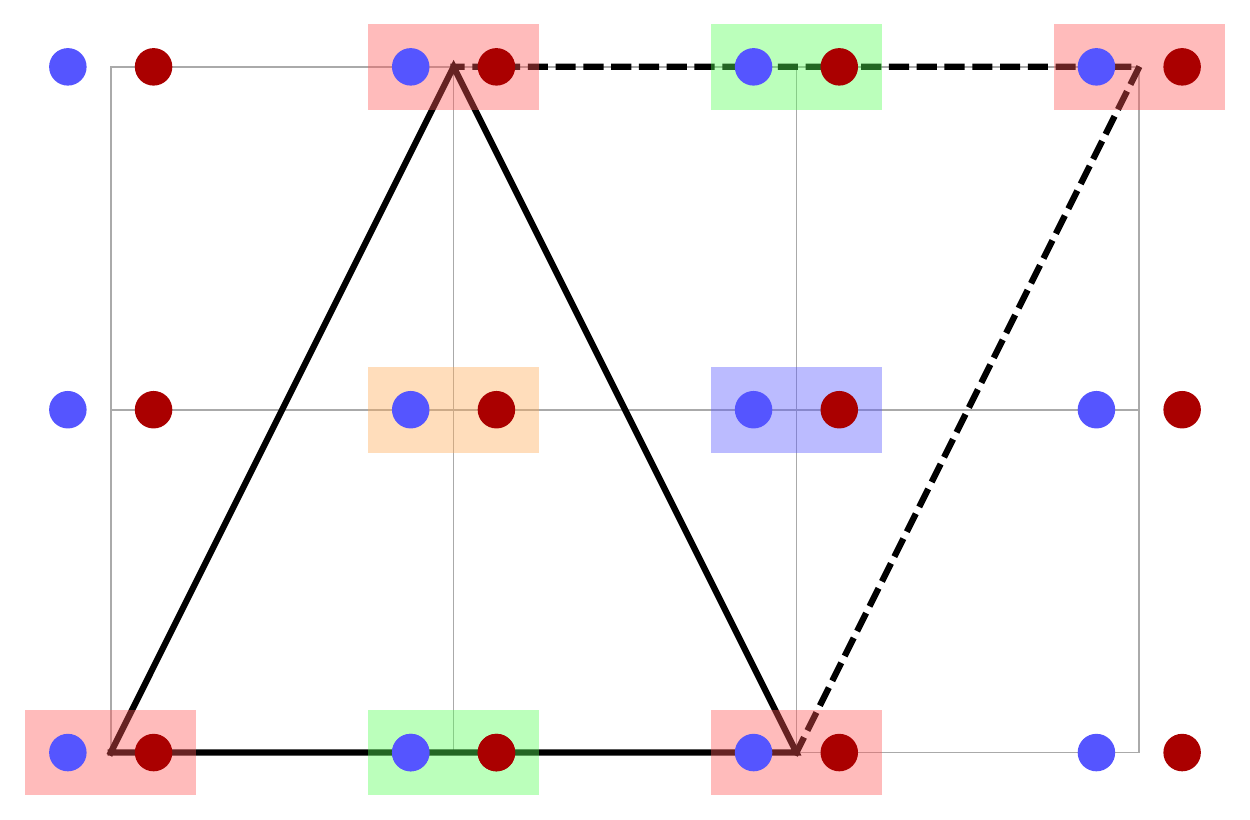}
\caption{A nonprimitive triangular exchange drive is represented by the solid lines, and a parallelogram is formed over a choice of edge. The shading of lattice sites indicates membership of different sublattices spanned by vectors of the parallelogram. The area of the solid triangle is four times the area of a primitive triangle, and there are correspondingly four different sublattices spanned by its component vectors.}
\label{fig:appB}
\end{figure}

This parallelogram may be tessellated to tile a sublattice partitioned by the drive. Each interior point of the original triangle results in two interior points of the parallelogram. Each edge point of the original triangle which lies on the edge used to construct the parallelogram results in an interior point of the parallelogram. Edge points on the other edges of the original triangle each result in two edge points of the parallelogram; however, these points are separated by a sublattice vector. By tiling the lattice with the same parallelogram but shifting the origin to these edge points and interior points, the total number of distinct sublattices spanned by the drive is found to be $1+ e + 2i$. 

Pick's theorem states that the area of a lattice polygon, in terms of the unit cell area, is given by
\be
A = v/2 + e/2 + i - 1,
\ee
where $v$ is the number of vertices. Recalling that the signed area defined in Eq.~\eqref{eq:signedarea} is given in terms of the primitive triangle area, we obtain 
\be
|A_s| &=& 2A\\
&=& 1 + e + 2i
\ee
for a triangular drive. Hence, the number of independent sublattices is equal to the magnitude of the signed area of the drive. Since each independent sublattice generates its own edge behavior, the edge behavior of a triangular drive is equivalent to a composition of $|A_s|$ primitive drives.

\end{document}